\renewcommand\mid{\,\vert\,}
\newcommand\Zpi{Z_{\pi}}
\newcommand\Zq{Z_q}
\newcommand\setU{\mathsf{U}}
\newcommand\setXU{\Theta}
\newcommand{\target}{\bar\pi}        
\newcommand{\utarget}{\pi}             
\newcommand{\extTarget}{\bar\Pi}
\newcommand{\extUTarget}{\Pi}
\newcommand{\extQ}{\bar Q}
\newcommand{\extQarg}[2]{\bar Q_{#2}^M(x_{#2}, u_{#2} \mid x_{1:#1}, u_{#1}) }
\newcommand{\WghtIS}{\omega}
\newcommand{\WghtSISR}{W'}
\newcommand{\Ztarget}[1]{Z_{\pi_{#1}}}        
\newcommand{\myd}{\mathrm{d}}              
\newcommand{\Zprop}[1]{Z_{q_{#1}}}        
\newcommand\XX{\mathbf{X}}
\renewcommand\AA{\mathbf{A}}
\newcommand\UU{\mathbf{U}}
\newcommand\xx{\mathbf{x}}
\renewcommand\aa{\mathbf{a}}
\newcommand\uu{\mathbf{u}}
\newcommand\dSMC{\bar\Psi_{\text{\nsmc}}}
\newcommand\dBS[1]{\bar\Psi_{\text{BS,#1}}}
\newcommand\dPMCMC{\bar\Phi}
\newcommand\Class{\mathsf{Q}}
\newcommand\Obj[1][q]{\mathsf{#1}}
\newcommand\GetZ{\mathsf{GetZ}}
\newcommand\Simulate{\mathsf{Simulate}}
\newcommand\addr{\#\mathsf{h}}
\newcommand\eqdef{:=}
\newcommand\defeq{=:}
\newcommand\pdf{PDF\@\xspace}
\newcommand\pf{PF\@\xspace}
\newcommand\mrf{MRF\@\xspace}
\newlength\Papproxplotheight
\newlength\boxplotheight
\newcounter{class}
\newcounter{asmp}
\def\asmpfont{\upshape}
\newenvironment{asmp}{\refstepcounter{asmp}\par\trivlist
   \item[\hskip \labelsep{\bfseries(A\theasmp)}]\asmpfont}
   {\endtrivlist}
\newcommand\asmpref[1]{(A\ref{#1})}
\begin{document} 
\newcommand{\coverTitle}{Nested Sequential Monte Carlo Methods}
\newcommand{\coverAuthors}{Christian A. Naesseth, Fredrik Lindsten and Thomas B. Sch{\"o}n}
\newcommand{\coverYear}{2015}
\newcommand{\coverStatus}{Accepted for publication.}

\begin{titlepage}
\begin{center}
{\large \em Technical report}

\vspace*{2.5cm}
%
{\Huge \bfseries \coverTitle  \\[0.4cm]}

%
{\Large \coverAuthors \\[2cm]}

\renewcommand\labelitemi{\color{red}\large$\bullet$}
\begin{itemize}
\item {\Large \textbf{Please cite this version:}} \\[0.4cm]
\large
\coverAuthors. \coverTitle. In \textit{Proceedings of the
$\mathit{32}$nd International Conference on Machine Learning},
Lille, France, 2015. JMLR: W\&CP volume 37.

\end{itemize}
\vfill

\begin{abstract}
  We propose \emph{nested sequential Monte Carlo} (\nsmc), a methodology to sample from sequences of
  probability distributions, even where the random variables are high-dimensional. 
  \nsmc generalises the \smc framework by requiring only approximate, \emph{properly weighted}, samples
  from the \smc proposal distribution, while still resulting in a correct \smc algorithm.
  Furthermore, \nsmc can in itself be used to produce such properly weighted samples. Consequently,
  one \nsmc sampler can be used to construct an efficient high-dimensional proposal distribution
  for another \nsmc sampler, and this \emph{nesting} of the algorithm can be done to an 
  arbitrary degree. This allows
  us to consider complex and high-dimensional models using \smc.
  We show results that motivate
  the efficacy of our approach on several
  filtering problems with dimensions in the order of 100 to \thsnd{1}.
\end{abstract}

\begin{keywords}
high-dimensional inference, high-dimensional particle filter, exact approximation, optimal proposal, sequential Monte Carlo, importance sampling, spatio-temporal models
\end{keywords}

\vfill

\end{center}
\end{titlepage}

\title{Nested Sequential Monte Carlo Methods}

\author{\name Christian A. Naesseth \email christian.a.naesseth@liu.se \\
        \addr Link\"oping University, Link\"oping, Sweden
        \AND
        \name Fredrik Lindsten \email fredrik.lindsten@eng.cam.ac.uk \\
        \addr The University of Cambridge, Cambridge, United Kingdom
        \AND
        \name Thomas B. Sch\"on \email thomas.schon@it.uu.se \\
        \addr Uppsala University, Uppsala, Sweden}
\editor{}

\maketitle

\begin{abstract}
  We propose \emph{nested sequential Monte Carlo} (\nsmc), a methodology to sample from sequences of
  probability distributions, even where the random variables are high-dimensional. 
  \nsmc generalises the \smc framework by requiring only approximate, \emph{properly weighted}, samples
  from the \smc proposal distribution, while still resulting in a correct \smc algorithm.
  Furthermore, \nsmc can in itself be used to produce such properly weighted samples. Consequently,
  one \nsmc sampler can be used to construct an efficient high-dimensional proposal distribution
  for another \nsmc sampler, and this \emph{nesting} of the algorithm can be done to an 
  arbitrary degree. This allows
  us to consider complex and high-dimensional models using \smc.
  We show results that motivate
  the efficacy of our approach on several
  filtering problems with dimensions in the order of 100 to \thsnd{1}.
\end{abstract} 

\begin{keywords}
high-dimensional inference, high-dimensional particle filter, exact approximation, optimal proposal, sequential Monte Carlo, importance sampling, spatio-temporal models
\end{keywords}

\section{Introduction}
\label{sec:intro}

Inference in complex and high-dimensional statistical models is a very challenging problem that is
ubiquitous in applications. Examples include, but are definitely not limited to, climate informatics
\citep{monteleoniEtAl2013}, bioinformatics \citep{cohen2004bioinformatics} and machine learning
\citep{wainwright2008graphical}.
In particular, we are interested in \emph{sequential} Bayesian inference, which involves computing integrals
of the form
\begin{align}
  \label{eq:expectation}  
  \target_k(f) \eqdef \E_{\target_k}[f(X_{1:k})] = \int f(x_{1:k}) \target_k(x_{1:k}) \myd x_{1:k},
\end{align}
for some sequence of probability densities
\begin{align}
  \label{eq:target-def}
  \target_k (x_{1:k}) &= \Ztarget{k}^{-1} \utarget_k (x_{1:k}) ,  &k &\geq 1,
\end{align}
with normalisation constants $\Ztarget{k} = \int \utarget_k(x_{1:k}) \myd x_{1:k}$. Note that
$x_{1:k} \eqdef \prange{x_1}{x_k} \in \setX_k$.
The typical scenario that we consider is the well-known problem of inference in time series or
state space models \citep{ShumwayS:2011,cappeMR2005inference}. Here the index $k$
corresponds to time and we want to process some \emph{observations} $y_{1:k}$ in a sequential manner
to compute expectations with respect to the filtering distribution
$\target_k (\myd x_k) = \Prb(X_k \in \myd x_k \mid y_{1:k})$. 
To be specific, we are interested in settings where
\vspace{-2mm}
\begin{enumerate}
  \setlength{\itemsep}{-1mm}
\item[\emph{(i)}] $X_k$ is high-dimensional, \ie $X_k \in \reals^d$ with $d \gg 1$, and
\item[\emph{(ii)}] there are \emph{local dependencies} among the latent variables $X_{1:k}$, both \wrt
  time~$k$ and between the individual components of the (high-dimensional) vectors $X_k$.
\end{enumerate}

One example of the type of models we consider are the so-called spatio-temporal models
\citep{Wikle:2015,CressieW:2011,RueH:2005}. In Figure~\ref{fig:droughtmodel} we provide a probabilistic
graphical model representation of a spatio-temporal model that we will explore
further in Section~\ref{sec:expts}.

\begin{figure}[tb]
\centering
\resizebox{.5\columnwidth}{!} {
\input{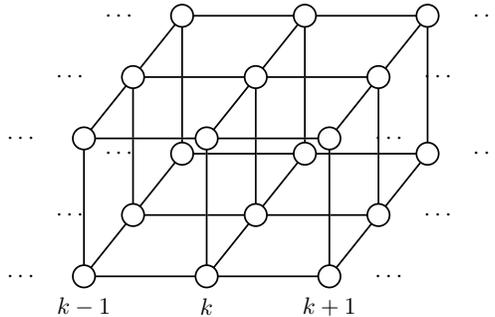}
}
\caption{Example of a spatio-temporal model where $\target_k (x_{1:k})$ is described by a $k \times 2 \times 3$ undirected graphical model and $x_k \in \reals^{2\times 3}$.}\label{fig:droughtmodel}
\end{figure}

Sequential Monte Carlo (\smc) methods, reviewed in Section~\ref{sec:backgrd:smc},
comprise one of the most successful methodologies for sequential Bayesian inference.
However, \smc struggles in high-dimensions and these methods are rarely used for dimensions, say, $d \geq 10$ \citep{rebeschiniH2015can}.
The purpose of the \nsmc methodology 
is to push this limit well beyond $d = 10$.

The basic strategy, described in Section~\ref{sec:Backg:AdaptProp},
is to mimic the behaviour of a so-called \emph{fully adapted} \smc algorithm.
Full adaptation can drastically improve the efficiency of \smc in high dimensions. Unfortunately, it can rarely
be implemented in practice since the fully adapted proposal distributions are typically intractable.
\nsmc addresses this difficulty by requiring only approximate, \emph{properly weighted}, samples
from the proposal distribution. The proper weighting condition ensures the validity of \nsmc,
thus providing a generalisation of the family of \smc methods.
Furthermore, \nsmc will itself produce properly weighted samples. Consequently,
it is possible to use one \nsmc procedure within another to construct efficient high-dimensional proposal distributions.
This \emph{nesting} of the algorithm can be done to an arbitrary degree. For instance, for the model depicted
in Figure~\ref{fig:droughtmodel} we could use three nested samplers, one for each dimension of the ``volume''.

The main methodological development is concentrated to Sections~\ref{sec:nested}--\ref{sec:nsmc}.
We introduce the concept of proper weighting, approximations of the proposal distribution, and
nesting of Monte Carlo algorithms.
Throughout Section~\ref{sec:nested} we consider simple importance sampling
and in Section~\ref{sec:nsmc} we extend the development to the sequential setting.

We deliberately defer the discussion of the existing body of related work until
Section~\ref{sec:rw}, to open up for a better understanding of the relationships to the new
developments presented in Sections~\ref{sec:nested}--\ref{sec:nsmc}. We also discuss various attractive features of \nsmc that are of interest in high-dimensional settings, \eg the
fact that it is easy to distribute the computation, which results in improved memory efficiency and
lower communication costs. Section~\ref{sec:expts}
profiles our method extensively with a state-of-the-art competing algorithm 
on several high-dimensional data sets. We also show the performance of inference and the modularity of the method
on a
$d=1\thinspace 056$ dimensional climatological spatio-temporal model \citep{fuBLS2012drought} structured according
to Figure~\ref{fig:droughtmodel}. Finally, in Section~\ref{sec:conclusions} we conclude the paper with some final remarks.


\section{Background and Inference Strategy}\label{sec:backgrd}

\subsection{Sequential Monte Carlo}\label{sec:backgrd:smc}
Evaluating $\target_k(f)$ as well as the normalisation constant $\Ztarget{k}$ in \eqref{eq:target-def} is typically
intractable and we need to resort to approximations. \smc methods, or particle filters (\pf), constitute a
popular class of numerical approximations for sequential inference problems. Here we give a
high-level introduction to the concepts underlying \smc methods, and postpone the details to
Section~\ref{sec:nsmc}.  For a more extensive treatment we refer to 
\citet{doucetJ2011a,cappeMR2005inference,doucetDG2001an}. In particular, we will use the auxiliary
\smc method as proposed by \citet{pittS1999filtering}.

At iteration $k-1$, the \smc sampler approximates the target distribution $\target_{k-1}$ by a collection of weighted particles (samples) $\{(X_{1:k-1}^i, W_{k-1}^i)\}_{i=1}^N$. These samples define an empirical
point-mass approximation of the target distribution
\begin{equation}
\target_{k-1}^N(\myd x_{1:k-1}) \eqdef \sum_{i=1}^N \frac{W^i_{k-1}}{\sum_\ell W^\ell_{k-1}}
\delta_{X_{1:k-1}^i}(\myd x_{1:k-1}),
\end{equation}
where $\delta_{X}(\myd x)$ denotes a Dirac measure at $X$. Each iteration of the \smc
algorithm can then conceptually be described by three steps, resampling, propagation, and weighting.

The resampling step puts emphasis on the most promising particles by discarding
the unlikely ones and duplicating the likely ones. The propagation and weighting steps essentially correspond
to using importance sampling when changing the target distribution from~$\target_{k-1}$
to~$\target_k$, \ie simulating new particles from a \emph{proposal distribution}
and then computing corresponding importance weights.

\subsection{Adapting the Proposal Distribution}
\label{sec:Backg:AdaptProp}
The first working \smc algorithm was the bootstrap \pf by \citet{GordonSS:1993}, which propagates
particles by sampling from the system dynamics and computes importance weights according to the
observation likelihood (in the state space setting). However, it is well known that the bootstrap
\pf suffers from weight collapse in high-dimensional settings \citep{bickelLB2008sharp}, \ie the
estimate is dominated by a single particle with weight close to one.  This is an effect of the
mismatch between the importance sampling proposal and the target distribution, which typically gets
more pronounced in high dimensions.

More efficient proposals, partially alleviating the degeneracy issue for some models, can be designed by \emph{adapting} the proposal distribution to the target
distribution (see Section~\ref{sec:nsmc:fapf}).
In \citet{naessethls2014capacity} we make use of the \emph{fully adapted} \smc method
\citep{pittS1999filtering} for doing inference in a (fairly) high-dimensional \emph{discrete}
model where $x_k$ is a $60$-dimensional discrete vector.
We can then make use of forward filtering and backward simulation, operating on the individual
\emph{components} of each $x_k$, in order to sample from the fully adapted \smc proposals. However, this method is limited to models where the latent space is either
discrete or Gaussian and the optimal proposal can be identified with a tree-structured graphical model.
Our development here can be seen as a non-trivial extension of this technique.  Instead of coupling one
\smc sampler with an \emph{exact} forward filter/backward simulator (which in fact reduces to an
instance of standard \smc), we derive a way of coupling multiple \smc samplers and \smc-based
backward simulators. This allows us to construct procedures for
mimicking the efficient fully adapted proposals for arbitrary latent spaces and structures in
high-dimensional models.


\section{Proper Weighting and Nested Importance Sampling}\label{sec:nested}

In this section we will lay the groundwork for the derivation of the class of \nsmc algorithms. 
We start by considering the simpler case of importance sampling (\is), which is a fundamental component of \smc,
and introduce the key concepts that we make use of.
In particular, we will use a (slightly nonstandard) presentation of an algorithm as an instance of
a \emph{class}, in the object-oriented sense, and show that these classes can be nested to an arbitrary degree.

\subsection{Exact Approximation of the Proposal Distribution}
Let $\target(x) = \Zpi^{-1}\utarget(x)$ be a target distribution of interest.
\is can be used to estimate an expectation $\target(f) \eqdef \E_{\target}[f(X)]$ by sampling from a proposal distribution
$\bar q (x) = \Zq^{-1}q(x)$ and computing the estimator $ (\sum_{i=1}^\Np W^i)^{-1} \sum_{i=1}^\Np W^i f(X^i)$, with $W^i
= \frac{Z _q\pi(X^i)}{q(X^i)}$, and
where $\{(X^i,W^i)\}_{i=1}^N$ are the weighted samples. It is possible to replace
the \is weight by a nonnegative unbiased estimate, and still obtain a valid (consistent, \etc.) algorithm \citep[p. 37]{liu2001monte}.
One way to motivate this approach is by considering the random weight to be an auxiliary variable
and to extend the target distribution accordingly. Our development is in the same flavour,
but we will use a more explicit condition on the relationship between the random weights and the simulated particles.
Specifically, we will make use of the following key property to formally justify the proposed algorithms.

\begin{definition}[Properly weighted sample]
  A (random) pair $(X, W)$ is properly weighted for an \emph{unnormalised} distribution $p$ if $W \geq 0$ and
  $\E[ f(X) W ] = p(f) \eqdef \int f(x)p(x)\myd x$ for all measurable functions $f$.
\end{definition}

Note that proper weighting of $\{(X^i, W^i)\}_{i=1}^N$ implies unbiasedness of the
estimate of the normalising constant of~$p$.
Indeed, taking $f(x) \equiv 1$
gives
\( \E\left[ \frac{1}{N} \sum_{i=1}^N W^i \right] = \int p(x) \myd x =: Z_p \).

Interestingly, to construct a valid \is algorithm for our target $\target$ it is sufficient to generate samples that are properly weighted \emph{\wrt the proposal} distribution $q$.
To formalise this claim, assume that we are not able to simulate exactly from $\bar q$,
but that it is possible to evaluate the unnormalised density $q$ point-wise.
Furthermore, assume we have access to a class $\Class$, which works as follows.
The constructor of $\Class$ requires the specification of an \emph{unnormalised density function}, say, $q$,
which will be approximated by the procedures of $\Class$.
Furthermore, to highlight the fact that we will typically use \is (and \smc) to construct $\Class$,
the constructor also takes as an argument a precision parameter $\Mp$, corresponding to the number of samples used by the
``internal'' Monte Carlo procedure. An object is then instantiated as $\Obj = \Class(q,\Mp)$. The class $\Class$ is assumed to have the following properties:
\begin{asmp}
  \label{asmp:proper}
  Let $\Obj = \Class(q,\Mp)$. Assume that:
  \vspace{-2mm}
  \setlength{\itemsep}{-1mm}
  \begin{enumerate}
  \item The construction of $\Obj$ results in the generation
    of a (possibly random) member variable, accessible as $\widehat Z_q = \Obj.\GetZ()$.
    The variable $\widehat Z_q$ is a nonnegative, unbiased estimate of the normalising
    constant $Z_q = \int q(x)dx$.
  \item $\Class$ has a member function $\Simulate$ which returns a (possibly random) variable
    $X = \Obj.\Simulate()$, such that $(X, \widehat Z_q)$ is properly weighted for $q$.
  \end{enumerate}
\end{asmp}

With the definition of $\Class$ in place, it is possible to generalise\footnote{With $\Obj.\GetZ() \mapsto Z$ and $\Obj.\Simulate()$ returning a sample from $\bar q$ we obtain the standard \is method.}
the basic importance sampler as in Algorithm~\ref{alg:nsmc:nested-is}, which generates weighted samples $\{ (X^i, W^i)  \}_{i=1}^\Np$ targeting $\target$.
Note that Algorithm~\ref{alg:nsmc:nested-is} is different from a random weight IS, since it
approximates the proposal distribution (and not just the importance weights).

\begin{algorithm}
  \caption{Nested \is (steps 1--3 for $i = \range{1}{\Np}$)}
  \label{alg:nsmc:nested-is}
  \begin{enumerate}
  \item Initialise $\Obj^i = \Class(q, M)$.
  \item Set $\widehat Z_q^i = \Obj^i.\GetZ()$ and $X^i = \Obj^i.\Simulate()$.
  \item Set $W^i = {\displaystyle \frac{ \widehat Z_q^i \pi(X^i)}{ q( X^i) }}$.
  \item Compute ${\widehat Z}_{\pi} = \frac{1}{\Np} \sum_{i=1}^\Np W^i.$
  \end{enumerate}
\end{algorithm}

To see the validity of Algorithm~\ref{alg:nsmc:nested-is} we can interpret the sampler as a standard \is
 algorithm for an extended target distribution, defined as
$\extTarget(x,u) \eqdef u \,\extQ(x,u) \target(x) q^{-1}(x)$,
where $\extQ(x,u)$ is the joint \pdf of the random pair $(\Obj.\Simulate(), \Obj.\GetZ())$.
 Note that
$\extTarget$ is indeed a \pdf that admits $\target$ as a marginal; 
for any measurable subset $A\subseteq \setX$,
\begin{align*}
  &\extTarget(A \times \reals_+)
  = \int \mathbbm{1}_A(x) \frac{u\,\target(x)}{q(x)}\extQ(x,u) \myd x\myd u 
  = \E\left[\widehat Z_q \frac{\mathbbm{1}_A(X) \target(X)}{q(X)} \right] = \bar q\left( \mathbbm{1}_A\frac{\target}{ q} \right) \Zq = \target(A),
\end{align*}
where the penultimate equality follows from the fact that $(X, \widehat Z_q)$ is properly weighted for $q$.
Furthermore, the standard unnormalised \is weight for a sampler with target $\extTarget$ and proposal $\extQ$ is given by $u \,\utarget/q$, in agreement with Algorithm~\ref{alg:nsmc:nested-is}.

Algorithm~\ref{alg:nsmc:nested-is} is an example of what is referred to as an \emph{exact approximation};
see \eg, \citet{AndrieuR:2009,andrieuDH2010particle}.
Algorithmically, the method appears to be an approximation of an \is, but samples
generated by the algorithm nevertheless target the correct distribution $\target$. 

\subsection{Modularity of Nested \is}\label{sec:nested:modular}
To be able to implement Algorithm~\ref{alg:nsmc:nested-is} we need to define a class $\Class$
with the required properties \asmpref{asmp:proper}. The modularity of the procedure (as well as its name)
comes from the fact that we can use Algorithm~\ref{alg:nsmc:nested-is} also in this respect.
Indeed, let us now view $\target$---the target distribution of Algorithm~\ref{alg:nsmc:nested-is}---as the
\emph{proposal distribution} for another Nested \is procedure and consider the following definition of $\Class$:
\vspace{-2mm}
\begin{enumerate}
  \setlength{\itemsep}{-1mm}
\item Algorithm~\ref{alg:nsmc:nested-is} is executed at the construction of the object $\Obj[p] = \Class(\utarget, \Np)$,
  and $\Obj[p].\GetZ()$ returns the normalising constant estimate $\widehat Z_{\pi}$.
\item $\Obj[p].\Simulate()$ simulates a categorical random variable $B$ with $\Prb(B = i) = W^i / \sum_{\ell=1}^\Np W^\ell$
  and returns $X^B$.
\end{enumerate}\vspace{-0.5\baselineskip}

Now, for any measurable $f$ we have,\vspace{-0.5\baselineskip}
\begin{multline}
  \E[ f(X^B) \widehat Z_\pi ] = \sum_{i=1}^{\Np} \E \left[ f(X^i) \widehat Z_\pi \frac{ W^i }{ N \widehat Z_\pi } \right] 
  = \frac{1}{N} \sum_{i=1}^{\Np} \E \left[ f(X^i) \frac{\widehat Z_q^i \utarget(X^i)}{ q(X^i) } \right] \nonumber \\
  = \bar q\left( \frac{f\utarget}{q} \right)\Zq = \target(f)\Zpi,\nonumber
\end{multline}
where, again, we use the fact that $(X^i, \widehat Z_q^i)$ is properly weighted for $q$.
This implies that $(X^B, \widehat Z_\pi)$ is properly weighted for $\pi$ and that our definition of $\Class(\pi, N)$
indeed satisfies condition \asmpref{asmp:proper}.

The Nested \is algorithm in itself is unlikely to be of direct practical interest. However,
in the next section we will, essentially, repeat the preceding derivation in the context of \smc
to develop the \nsmc method.


\section{Nested Sequential Monte Carlo}\label{sec:nsmc}
\subsection{Fully Adapted SMC Samplers}
Let us return to the sequential inference problem. As before, let
\(
\target_k (x_{1:k}) = \Ztarget{k}^{-1}\utarget_k (x_{1:k})
\)
denote the target distribution at ``time'' $k$. The unnormalised density $\utarget_k$ can be evaluated point-wise,
but the normalising constant $\Ztarget{k}$ is typically unknown.
We will use \smc to simulate sequentially from the distributions $\{\target_k \}_{k=1}^n$.
In particular, we consider the fully adapted SMC sampler \citep{pittS1999filtering},
which corresponds to a specific choice of resampling weights and proposal distribution, chosen in
such a way that the importance weights are all equal to $1/N$. Specifically, the proposal distribution (often referred
to as the \emph{optimal proposal}) is given by
$ \bar q_k(x_{k} \mid x_{1:k-1}) = \Zprop{k}(x_{1:k-1})^{-1} q_k (x_k \mid x_{1:k-1})$,
where
\begin{align*}
  q_k (x_k \mid x_{1:k-1}) &\eqdef \frac{\utarget_k(x_{1:k}) }{ \utarget_{k-1}(x_{1:k-1}) }.
\end{align*}
In addition, the normalising ``constant'' $\Zprop{k}(x_{1:k-1}) = \int q_k (x_k \mid x_{1:k-1}) \myd x_k$
is further used to define the \emph{resampling weights}, \ie the particles at time $k-1$ are
resampled according to $\Zprop{k}(x_{1:k-1})$ before they are propagated to time $k$.
For notational simplicity, we use the convention
$x_{1:0} = \emptyset$,
 $q_1(x_1 \mid x_{1:0}) = \utarget_1(x_1)$ and $Z_{q_1}(x_{1:0}) = \Ztarget{1}$.
The fully adapted auxiliary \smc sampler is given in Algorithm~\ref{alg:nsmc:fapf}.

\begin{algorithm}[tb]
  \caption{SMC (fully adapted)}
  \label{alg:nsmc:fapf}
  \begin{enumerate}
  \item 
    Set $\widehat Z_{\pi_0} =~1$.
  \item \textbf{for $k=1$ to $n$}
    \begin{enumerate}
    \item Compute $\widehat Z_{\pi_k} = \widehat Z_{\pi_{k-1}}\times\frac{1}{\Np}\sum_{j=1}^\Np \Zprop{k}(X_{1:k-1}^j).$
    \item  Draw $m_k^{1:\Np}$ from a multinomial distribution with probabilities
      \(\frac{\Zprop{k}(X_{1:k-1}^j)}{ \sum_{\ell=1}^\Np \Zprop{k}(X_{1:k-1}^\ell) },\)
      for $j = \range{1}{\Np}$. 
    \item Set $L \gets 0$
    \item \textbf{for $j=1$ to $\Np$}
      \begin{enumerate}
      \item Draw $X_k^i \sim \bar q_k(\cdot \mid X_{1:k-1}^j)$ and let $X_{1:k}^i = (X_{1:k-1}^j, X_{k}^i)$
        for $i = \range{L+1}{L+m_k^j}$.
      \item Set $L \gets L+m_k^j$.
      \end{enumerate}
    \end{enumerate}
  \end{enumerate}
\end{algorithm}

As mentioned above, at each iteration $k = \range{1}{n}$,
the method produces \emph{unweighted} samples $\{X_k^i\}_{i=1}^\Np$ approximating $\target_k$.
It also produces an unbiased estimate $\widehat Z_{\pi_k}$ of $Z_{\pi_k}$ \citep[Proposition~7.4.1]{DelMoral:2004}.
The algorithm is expressed in a slightly non-standard form; at iteration $k$ we loop over the ancestor particles, \ie the particles after resampling at
iteration $k-1$, and let each ancestor particle $j$ generate $m_k^j$ offsprings. (The variable $L$ is just for bookkeeping.)
This is done to clarify the connection with the \nsmc procedure below.
Furthermore, we have included a (completely superfluous) resampling step at iteration $k=1$, where the ``dummy variables''
$\{X_{1:0}^i\}_{i=1}^\Np$ are resampled according to the (all equal) weights $\{\Zprop{1}(X_{1:0}^i)\}_{i=1}^\Np = \{\Ztarget{1}\}_{i=1}^\Np$.
The analogue of this step is, however, used in the \nsmc algorithm, where the initial normalising constant
$Z_{\pi_1}$ is \emph{estimated}. We thus have to resample the corresponding initial particle systems accordingly.

\subsection{Fully Adapted Nested SMC Samplers}\label{sec:nsmc:fapf}
In analogue with Section~\ref{sec:nested}, assume now that we are not able to simulate exactly
from $\bar q_k$, nor compute $\Zprop{k}$.
Instead, we have access to a class $\Class$ which satisfies condition \asmpref{asmp:proper}.
The proposed \nsmc method is then given by Algorithm~\ref{alg:nsmc:nested-fapf}.

\begin{algorithm}
  \caption{Nested SMC (fully adapted)}
  \label{alg:nsmc:nested-fapf}
  \begin{enumerate}
  \item 
    Set $\widehat Z_{\pi_0} = 1$.
  \item \textbf{for $k=1$ to $n$}
    \begin{enumerate}
    \item Initialise $\Obj^j = \Class(q_k(\cdot \mid X_{1:k-1}^j), M)$ for $j = \range{1}{\Np}$.
    \item Set $\widehat Z_{q_k}^j = \Obj^j.\GetZ()$ for $j = \range{1}{\Np}$.
    \item Compute $\widehat Z_{\pi_k} = \widehat Z_{\pi_{k-1}}\times\left\{ \frac{1}{\Np}\sum_{j=1}^\Np \widehat Z_{q_k}^j \right\}.$
    \item Draw $m_k^{1:\Np}$ from a multinomial distribution with probabilities
      \( \frac{\widehat Z_{q_k}^j}{ \sum_{\ell=1}^\Np \widehat Z_{q_k}^\ell } \)
      for $j = \range{1}{\Np}$. 
    \item Set $L \gets 0$
    \item \textbf{for $j=1$ to $\Np$}
      \begin{enumerate}
      \item \label{item:sim-iii}Compute $X_k^i = \Obj^j.\Simulate()$ and let $X_{1:k}^i = (X_{1:k-1}^j, X_{k}^i)$
        for $i = \range{L+1}{L+m_k^j}$.
      \item \textbf{delete} $\Obj^j$.
      \item Set $L \gets L+m_k^j$.
      \end{enumerate}
    \end{enumerate}
  \end{enumerate}
\end{algorithm}

Algorithm~\ref{alg:nsmc:nested-fapf} can be seen as an \emph{exact approximation}
of the fully adapted \smc sampler in Algorithm~\ref{alg:nsmc:fapf}. (In Appendix~\ref{sec:supp:nsmc} we provide a formulation of
\nsmc with arbitrary proposals and resampling weights.)
We replace the exact computation of $\Zprop{k}$ and exact simulation from $\bar q_k$,
by the approximate procedures available through $\Class$.
Despite this approximation, however, Algorithm~\ref{alg:nsmc:nested-fapf} is a valid
\smc method. This is formalised by the following theorem.

\begin{theorem}
  \label{thm:clt}
  Assume that $\Class$ satisfies condition~\asmpref{asmp:proper}.
  Then, under certain regularity conditions on the function $f: \setX_k \mapsto \reals^d$
  and for an asymptotic variance $\Sigma_k^M(f)$, both specified in Appendix~\ref{sec:supp:clt}, we have
  \begin{align*}
   N^{1/2}\left( \frac{1}{N}\sum_{i=1}^N  f(X_{1:k}^i) - \target_k(f) \right) \convD \N(0, \Sigma_k^M(f)),
  \end{align*}
  where $\{X_{1:k}^i \}_{i=1}^M$ are generated by Algorithm~\ref{alg:nsmc:nested-fapf} and $\convD$ denotes convergence
  in distribution.
 \end{theorem}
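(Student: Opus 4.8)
The plan is to establish the central limit theorem by exhibiting Algorithm~\ref{alg:nsmc:nested-fapf} as a \emph{standard} fully adapted \smc sampler operating on an extended state space, and then invoking the classical \smc CLT for this enlarged system. This mirrors the importance-sampling argument given just above for Algorithm~\ref{alg:nsmc:nested-is}, where the random pair $(X,\widehat Z_q)$ is absorbed into an auxiliary variable $u$ and the extended target $\extTarget(x,u) = u\,\extQ(x,u)\target(x)q^{-1}(x)$ is shown to admit $\target$ as a marginal. The sequential version will require the analogous bookkeeping at every time step $k$.

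First I would define, for each $k$, an extended per-step proposal that records not only the simulated particle $X_k$ but also the internal randomness of the object $\Obj = \Class(q_k(\cdot\mid x_{1:k-1}),M)$, in particular the estimate $\widehat Z_{q_k}$ returned by $\GetZ()$. Writing $u_k$ for this estimate (together with whatever auxiliary variables $\Class$ generates), I would form the extended target density $\extTarget_k(x_{1:k}, u_{1:k})$ on the augmented space, chosen so that its $x_{1:k}$-marginal is exactly $\target_k$. The key identity to verify is that the proper-weighting property from \asmpref{asmp:proper} forces $\E[\widehat Z_{q_k}\, g(X_k)\mid x_{1:k-1}] = \int g(x_k) q_k(x_k\mid x_{1:k-1})\,\myd x_k$ for all test functions $g$; this is precisely what makes the marginalisation collapse, step by step, down to $\target_k$. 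Second I would check that the three operations of Algorithm~\ref{alg:nsmc:nested-fapf}—computing $\widehat Z_{q_k}^j$, resampling with weights proportional to $\widehat Z_{q_k}^j$, and setting $X_k^i=\Obj^j.\Simulate()$—are \emph{literally} the fully adapted resampling/propagation steps of Algorithm~\ref{alg:nsmc:fapf} applied to the extended model, with the extended resampling weight $\Zprop{k}$ replaced by its realised value $\widehat Z_{q_k}$, which on the extended space plays the role of the (now exact) optimal resampling weight.

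Once the algorithm is recast as a bona fide fully adapted \smc sampler for the sequence $\{\extTarget_k\}$, the CLT follows from the standard asymptotic theory for \smc as $N\to\infty$ with $M$ held fixed; see the classical results of \citet{DelMoral:2004}, which give convergence in distribution with an asymptotic variance expressible through the usual recursion over the extended kernels. The limiting variance $\Sigma_k^M(f)$ is then obtained by specialising this generic expression to the extended model and marginalising over the auxiliary coordinates $u_{1:k}$; its explicit form, along with the regularity hypotheses on $f$ and on the moments of $\widehat Z_{q_k}$ needed to apply the theorem, are deferred to the appendices as stated.

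The main obstacle I anticipate is precise control of the extended proposal $\extQ$ and the attendant integrability conditions. Because $\Class$ is itself an arbitrary (possibly nested) Monte Carlo procedure, the law $\extQ_k(x_k,u_k\mid x_{1:k-1})$ of the pair $(\Obj.\Simulate(),\Obj.\GetZ())$ is not available in closed form, and one must argue entirely through the proper-weighting characterisation rather than through explicit densities. Care is needed to confirm that the extended weights remain well defined and have finite second moments—so that the variance $\Sigma_k^M(f)$ is finite—and that the marginalisation identity holds jointly with the resampling mechanism rather than merely in isolation at a single step. Verifying that the $M$-dependence is correctly tracked, so that the stated variance genuinely depends on the precision parameter $M$ through the fluctuations of $\widehat Z_{q_k}$, is the delicate accounting step that the rest of the proof hinges upon.
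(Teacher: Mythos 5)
Your proposal follows essentially the same route as the paper: augment the state with the internal randomness of $\Class$, use the proper-weighting condition to show (exactly the paper's Lemma~\ref{lem:app:proper-1}) that marginalising the auxiliary variables recovers $q_k$ and hence, by induction, that the extended target admits $\target_k$ as its marginal, then recognise Algorithm~\ref{alg:nsmc:nested-fapf} as a standard (auxiliary) \smc sampler on the extended space and invoke an off-the-shelf \smc CLT with $M$ fixed and $N\to\infty$. The paper routes the last step through \citet{JohansenD:2008} and \citet{Chopin:2004} (reinterpreting the auxiliary sampler as SISR with adjustment-multiplier-modified targets) rather than \citet{DelMoral:2004}, but this is a difference of reference, not of argument.
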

\begin{proof}
  See Appendix~\ref{sec:supp:clt}.
\end{proof}

\begin{remark}
  The key point with Theorem~\ref{thm:clt} is that, under certain regularity conditions, the \nsmc
  method converges at rate $\sqrt{N}$ even for a fixed (and finite) value of the precision parameter
  $\Mp$. The asymptotic variance $\Sigma_k^M(f)$, however, will depend on the accuracy and
  properties of the approximative procedures of $\Class$. 
  We leave it as future work to establish more informative results, relating the asymptotic variance of \nsmc to that of the
  ideal, fully adapted \smc sampler.
\end{remark}

\subsection{Backward Simulation and Modularity of \nsmc}
As previously mentioned, the \nsmc procedure is modular in the sense
that we can make use of Algorithm~\ref{alg:nsmc:nested-fapf} also to define the class $\Class$.
Thus, we now view $\target_n$ as the \emph{proposal distribution} that we wish to approximately sample from using \nsmc.
Algorithm~\ref{alg:nsmc:nested-fapf} directly generates an estimate $\widehat Z_{\pi_n}$ of the normalising
constant of $\utarget_n$ (which indeed is unbiased, see Theorem~\ref{thm:proper}).
However, we also need to generate a sample $\widetilde X_{1:n}$ such that
$(\widetilde X_{1:n}, \widehat Z_{\pi_n})$ is properly weighted for~$\utarget_n$.

The simplest approach, akin to the Nested \is procedure described in Section~\ref{sec:nested:modular},
is to draw $B_n$ uniformly on $\crange{1}{\Np}$ and return $\widetilde X_{1:n} = X_{1:n}^{B_n}$.
This will indeed result in a valid definition of the $\Simulate$ procedure. However, this approach will suffer from the well known path degeneracy of \smc samplers. In particular, since we call $\Obj^j.\Simulate()$ multiple times in Step~\ref{item:sim-iii} of Algorithm~\ref{alg:nsmc:nested-fapf}, we risk to obtain (very) strongly correlated samples by this simple approach. 

It is possible to improve the performance of the above procedure by instead making use of a \emph{backward simulator}
\citep{GodsillDW:2004,LindstenS:2013} to simulate $\widetilde X_{1:n}$.
The backward simulator, given in Algorithm~\ref{alg:nsmc:bs}, is a type of smoothing algorithm;
it makes use of the particles generated by a forward pass of Algorithm~\ref{alg:nsmc:nested-fapf}
to simulate backward in ``time'' a trajectory $\widetilde X_{1:n}$ approximately
distributed according to $\target_n$.

\begin{algorithm}
  \caption{Backward simulator (fully adapted)} 
  \label{alg:nsmc:bs}
  \begin{enumerate}
  \item Draw $B_n$ uniformly on $\crange{1}{\Np}$.
  \item Set $\widetilde X_n = X_n^{B_n}$.
  \item \textbf{for $k=n-1$ to $1$}
    \begin{enumerate}
    \item Compute ${\displaystyle \widetilde W_k^j = \frac{ \utarget_n( (X_{1:k}^j, \widetilde X_{k+1:n}) ) }{ \utarget_k(X_{1:k}^j) }}$ for $j = \range{1}{\Np}$.
    \item Draw $B_k$ from a categorical distribution with probabilities
      ${\displaystyle \frac{ \widetilde W_k^j }{ \sum_{\ell=1}^\Np  \widetilde W_k^\ell }}$ for $j = \range{1}{\Np}$.
    \item Set $\widetilde X_{k:n} = (X_k^{B_k}, \widetilde X_{k+1:n})$.
    \end{enumerate}
  \end{enumerate}
\end{algorithm}

\begin{remark}
  Algorithm~\ref{alg:nsmc:bs} assumes unweighted particles and can thus be used in conjunction with the fully
  adapted \nsmc procedure of Algorithm~\ref{alg:nsmc:fapf}. If, however, the forward filter is not fully adapted
  the weights need to be accounted for in the backward simulation; see Appendix~\ref{sec:supp:proper}.
\end{remark}

The modularity of \nsmc is established by the following result.
\begin{definition}
  \label{def:class}
  Let $\Obj[p] = \Class(\utarget_n, N)$ be defined as follows:
  \vspace*{-3mm}
  \begin{enumerate}
  \setlength{\itemsep}{-1mm}
  \item The constructor executes Algorithm~\ref{alg:nsmc:nested-fapf} with target distribution $\utarget_n$
    and with $\Np$ particles, and $\Obj[p].\GetZ()$ returns the estimate of the normalising constant $\widehat Z_{\pi_n}$.
  \item $\Obj[p].\Simulate()$ executes Algorithm~\ref{alg:nsmc:bs} and returns $\widetilde X_{1:n}$.
  \end{enumerate}
\end{definition}

\begin{theorem}
  \label{thm:proper}
  The class $\Class$ defined as in Definition~\ref{def:class} satisfies condition~\asmpref{asmp:proper}.
\end{theorem}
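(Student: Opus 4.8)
The plan is to mirror the extended-target / importance-sampling argument of Section~\ref{sec:nested:modular}, now carried recursively through ``time'' $k=1,\dots,n$ and, additionally, through the backward pass of Algorithm~\ref{alg:nsmc:bs}. I would verify the two requirements of \asmpref{asmp:proper} in turn: that $\widehat Z_{\pi_n}=\Obj[p].\GetZ()$ is a nonnegative unbiased estimate of $\Ztarget{n}$, and that the trajectory $\widetilde X_{1:n}=\Obj[p].\Simulate()$ returned by the backward simulator satisfies $\E[f(\widetilde X_{1:n})\,\widehat Z_{\pi_n}]=\target_n(f)\,\Ztarget{n}$ for every measurable $f$. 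Nonnegativity is immediate, since $\widehat Z_{\pi_n}=\prod_{k=1}^n\{\tfrac{1}{N}\sum_{j=1}^N\widehat Z_{q_k}^j\}$ and each $\widehat Z_{q_k}^j\geq0$ by part~1 of \asmpref{asmp:proper}.

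The workhorse is the ensemble invariant $\E[\widehat Z_{\pi_k}\,\tfrac{1}{N}\sum_{i=1}^N h(X_{1:k}^i)]=\int h(x_{1:k})\,\utarget_k(x_{1:k})\,\myd x_{1:k}$, which I would prove by induction on $k$. The base case $k=1$ is proper weighting of $(\Obj^j.\Simulate(),\widehat Z_{q_1}^j)$ for $q_1=\utarget_1$. In the inductive step I would condition on the genealogy entering iteration $k$: taking expectations over the multinomial counts $m_k^{1:N}$ cancels the resampling denominators $\sum_\ell\widehat Z_{q_k}^\ell$ against the factor $\tfrac{1}{N}\sum_j\widehat Z_{q_k}^j$ inside $\widehat Z_{\pi_k}$, and then proper weighting of each $(\Obj^j.\Simulate(),\widehat Z_{q_k}^j)$ for $q_k(\cdot\mid X_{1:k-1}^j)$ replaces $\widehat Z_{q_k}^j$ by an integral against the incremental target $q_k=\utarget_k/\utarget_{k-1}$, so that the inductive hypothesis at time $k-1$ closes the recursion. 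Setting $h\equiv1$ gives unbiasedness of $\widehat Z_{\pi_n}$, which is the first requirement; and, incidentally, setting $h=f$ shows that the \emph{uniform-index} choice $\widetilde X_{1:n}=X_{1:n}^{B_n}$ is already properly weighted. The actual content of the theorem is therefore that the backward simulator of Algorithm~\ref{alg:nsmc:bs} preserves this property.

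For that I would introduce an extended target $\extUTarget$ over \emph{all} variables produced by one run of Algorithm~\ref{alg:nsmc:nested-fapf} --- the particles, the internal pairs $(\Obj^j.\Simulate(),\widehat Z_{q_k}^j)$ with joint law $\extQ$, and the resampling counts $m_k^{1:N}$ --- together with the backward indices $B_{1:n}$ of Algorithm~\ref{alg:nsmc:bs}. In analogy with $\extTarget(x,u)=u\,\extQ(x,u)\,\target(x)q^{-1}(x)$, I would construct $\extUTarget$ so that it integrates to $\Ztarget{n}$ and admits $\utarget_n$ as the marginal law of the stitched trajectory $(X_1^{B_1},\dots,X_n^{B_n})$. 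The combined forward--backward procedure is then identified as importance sampling for $\extUTarget$: the forward multinomial draws and the backward categorical draws with probabilities $\widetilde W_k^j/\sum_\ell\widetilde W_k^\ell$ jointly form the proposal, the associated \is weight is shown to collapse to $\widehat Z_{\pi_n}$, and proper weighting of $(\widetilde X_{1:n},\widehat Z_{\pi_n})$ for $\utarget_n$ follows by the change-of-measure identity exactly as in Section~\ref{sec:nested:modular}.

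The hardest part will be this collapse of the \is weight, together with the genealogical bookkeeping it requires. Because the backward weights $\widetilde W_k^j=\utarget_n((X_{1:k}^j,\widetilde X_{k+1:n}))/\utarget_k(X_{1:k}^j)$ graft the full ancestral path $X_{1:k}^j$ onto the already-fixed future $\widetilde X_{k+1:n}$ --- rather than referencing the stitched trajectory directly --- the telescoping is delicate. I would organise it by a backward induction over $k=n,n-1,\dots,1$, checking at each step that, conditionally on the forward system, the draw of $B_k$ is exactly a draw from the corresponding conditional of $\extUTarget$; this is what makes the product of backward selection probabilities cancel against the product form of $\widehat Z_{\pi_n}$ and leaves precisely $\int f(x_{1:n})\,\utarget_n(x_{1:n})\,\myd x_{1:n}=\target_n(f)\,\Ztarget{n}$.
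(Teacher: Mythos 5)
Your overall architecture --- an extended target over all of the forward algorithm's random variables, with the combined forward--backward pass interpreted as a sampler for it --- is the same as the paper's, which realises this via the PMCMC construction $\dPMCMC$ of \eqref{eq:proof2:phi} and then shows that each backward draw is a partially collapsed Gibbs step leaving $\dPMCMC$ invariant. Your forward ``ensemble invariant'' $\E[\widehat Z_{\pi_k}\,N^{-1}\sum_i h(X_{1:k}^i)]=\utarget_k(h)$, proved by direct induction with the multinomial means cancelling the resampling denominators, is a sound and somewhat more self-contained substitute for the paper's appeal to standard unbiasedness results on the extended space, and you correctly isolate the backward simulator as the real content of the theorem (the uniform-index variant being easy).

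The gap is in the backward step. You assert that, ``conditionally on the forward system, the draw of $B_k$ is exactly a draw from the corresponding conditional of $\extUTarget$.'' Taken literally this is false: conditionally on \emph{all} forward variables, including the internal states $U_{k:n}$ of the nested objects, the conditional law of $b_k$ under the extended target carries the factors $\tau_s(u_{s-1})\,\bar\psi_{s-1}^M(u_{s-1}\mid x_{1:s-1})\,\bar\gamma_s^M(x_s\mid u_{s-1})/q_s(x_s\mid x_{1:s-1})$ for $s>k$ (cf.\ \eqref{eq:proof2:bs-i2}), and these do not cancel pointwise against anything in the backward weight $\widetilde W_k^{j}=\utarget_n(\cdot)/\utarget_k(\cdot)$, which never references the $u$'s. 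The backward kernel only matches a \emph{partially collapsed} conditional: one must jointly draw $b_k$ and marginalise out the internal states $u_{k:n}^{b_{k:n}}$ attached to the selected path, and the identification works only because proper weighting (Lemma~\ref{lem:app:proper-1}) makes each of the factors above integrate to one over $u_{s-1}$. This second, essential invocation of \asmpref{asmp:proper} --- without which the $\utarget_n/\utarget_k$ weights of Algorithm~\ref{alg:nsmc:bs} do not correspond to any conditional of your extended target --- is absent from your plan, and it is precisely where the ``delicate telescoping'' you flag either succeeds or fails.
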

\begin{proof}
  See Appendix~\ref{sec:supp:proper}.
\end{proof}
A direct, and important, consequence of Theorem~\ref{thm:proper} is that \nsmc can be used as a component of powerful
learning algorithms, such as the particle Markov chain Monte Carlo (\pmcmc) method \citep{andrieuDH2010particle}
and many of the other methods discussed in Section~\ref{sec:rw}.
Since standard \smc is a special case of \nsmc, Theorem~\ref{thm:proper}
implies proper weighting also of \smc.


\section{Practicalities and Related Work} \label{sec:rw}

There has been much recent interest in using \smc within \smc in various ways.
The SMC$^2$ by \citet{chopinJP2013smc2} and the recent method by \citet{crisanM2013nested}
are sequential learning algorithms for state space models, where one
\smc sampler for the parameters is coupled with another \smc sampler for the latent states.
\citet{johansenWD2012exact} and \citet{chenSOL2011decentralized} address the state inference problem
by splitting the state variable into different components and run coupled \smc samplers for these components.
These methods differ substantially from \nsmc; they solve different problems and
the ``internal'' \smc sampler(s) is constructed in a different way (for approximate marginalisation instead
of for approximate simulation).
Another related method is the random weights \pf of \citet{fearnheadPRS2010random}, requiring exact samples from $\bar q$ and where the importance weights are estimated using a nested Monte Carlo algorithm.

The method most closely related to \nsmc is the space-time particle filter (\stpf) \citep{beskosCJKZ2014a},
which has been developed independently and in parallel with our work. The \stpf is also
designed for solving inference problems in high-dimensional models. It
can be seen as a island \pf \citep{vergeDDM2013on} implementation of the method presented by
\citet{naessethls2014sequential}. Specifically, for a spatio-temporal models they run an island \pf
over both spatial and temporal dimensions. However, the \stpf does not generate an approximation of the fully adapted \smc sampler. 

Another key distinction between \nsmc and \stpf is that in the latter each particle in the ``outer'' \smc sampler
comprises a complete particle system from the ``inner'' \smc sampler. For \nsmc, on the other hand,
the particles will simply correspond to different hypotheses about the latent variables (as in standard \smc),
regardless of how many samplers that are nested. This is a key feature of \nsmc, since it implies that it is easily distributed over the particles. The main computational effort of Algorithm~\ref{alg:nsmc:nested-fapf}
is the construction of $\{\Obj^j\}_{j=1}^N$ and the calls to the $\Simulate$ procedure,
which can be done independently for each particle. This leads to improved memory efficiency and lower communication costs.
Furthermore, we have found (see Section~\ref{sec:expts}) that \nsmc can outperform \stpf even when
run on a single machine with matched computational costs.

Another strength of \nsmc methods are their relative ease of implementation, which we show in Section~\ref{sec:drought}. We use the framework to sample from what is essentially a cubic grid Markov random field (\mrf) model just by implementing three
nested samplers, each with a target distribution defined on a simple chain.

There are also other \smc-based methods designed for high-dimensional problems, \eg,
the block \pf studied by \citet{rebeschiniH2015can}, the
location particle smoother by \citet{briggsDM2013data} and the \pf-based methods reviewed in \citet{djuric2013particle}. 
However, these methods are all inconsistent, as they are based on various approximations that
result in systematic errors.

The previously mentioned \pmcmc \citep{andrieuDH2010particle} is a related method, where \smc is used as a component of an \mcmc algorithm. We make use of a very similar extended space approach to motivate the validity of our algorithm.
Note that our proposed algorithm can be used as a component in
\pmcmc and most of the other algorithms mentioned above, which further increases the scope of models it can handle.


\section{Experimental Results}\label{sec:expts}
We illustrate \nsmc on three high-dimensional examples, both with real and synthetic data. We compare \nsmc with standard (bootstrap) \pf and the \stpf of \citet{beskosCJKZ2014a} with equal computational budgets on a single machine (\ie, neglecting the fact that \nsmc is more easily distributed). These methods are, to the best of our knowledge, the only other available \emph{consistent} online methods for full Bayesian inference in general sequential models. 
For more detailed explanations of the models and additional results, see Appendix~\ref{sec:supp:expts}\footnote{Code available at \url{https://github.com/can-cs/nestedsmc}}.

\setlength{\tabcolsep}{0.5pt}
\renewcommand{\arraystretch}{0.5}
\begin{figure*}[tb]
    \centering
	\begin{tabular}{m{.01\textwidth} m{.01\textwidth} m{.3\textwidth} m{.3\textwidth} m{.3\textwidth} m{.000005\textwidth} }
	 & & \centering \small $d=50$ & \centering \small $d=100$ & \centering \small $d=200$ & \\
	 \rotatebox{90}{ESS} & & \includegraphics[width=.3\textwidth]{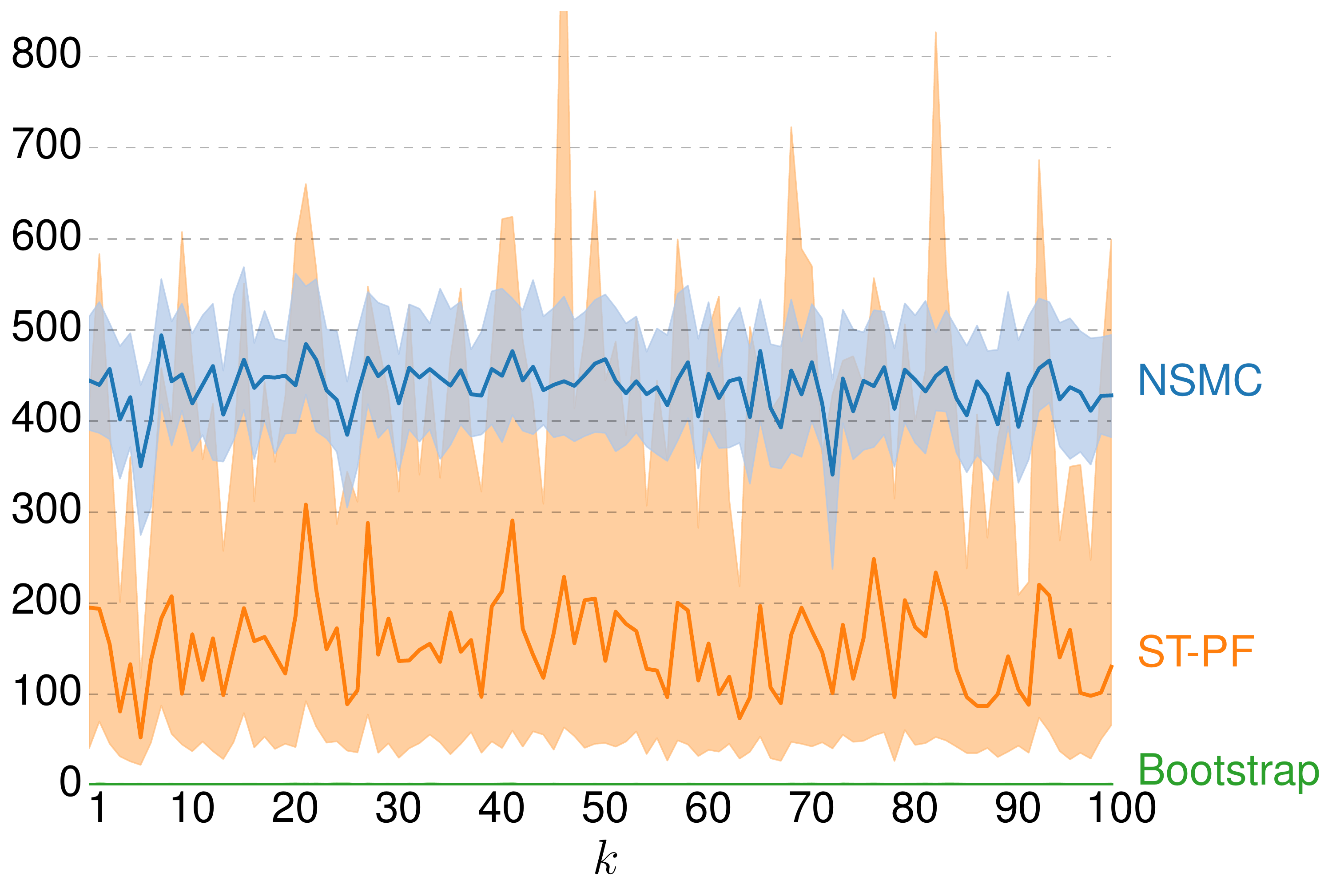} & \includegraphics[width=.3\textwidth]{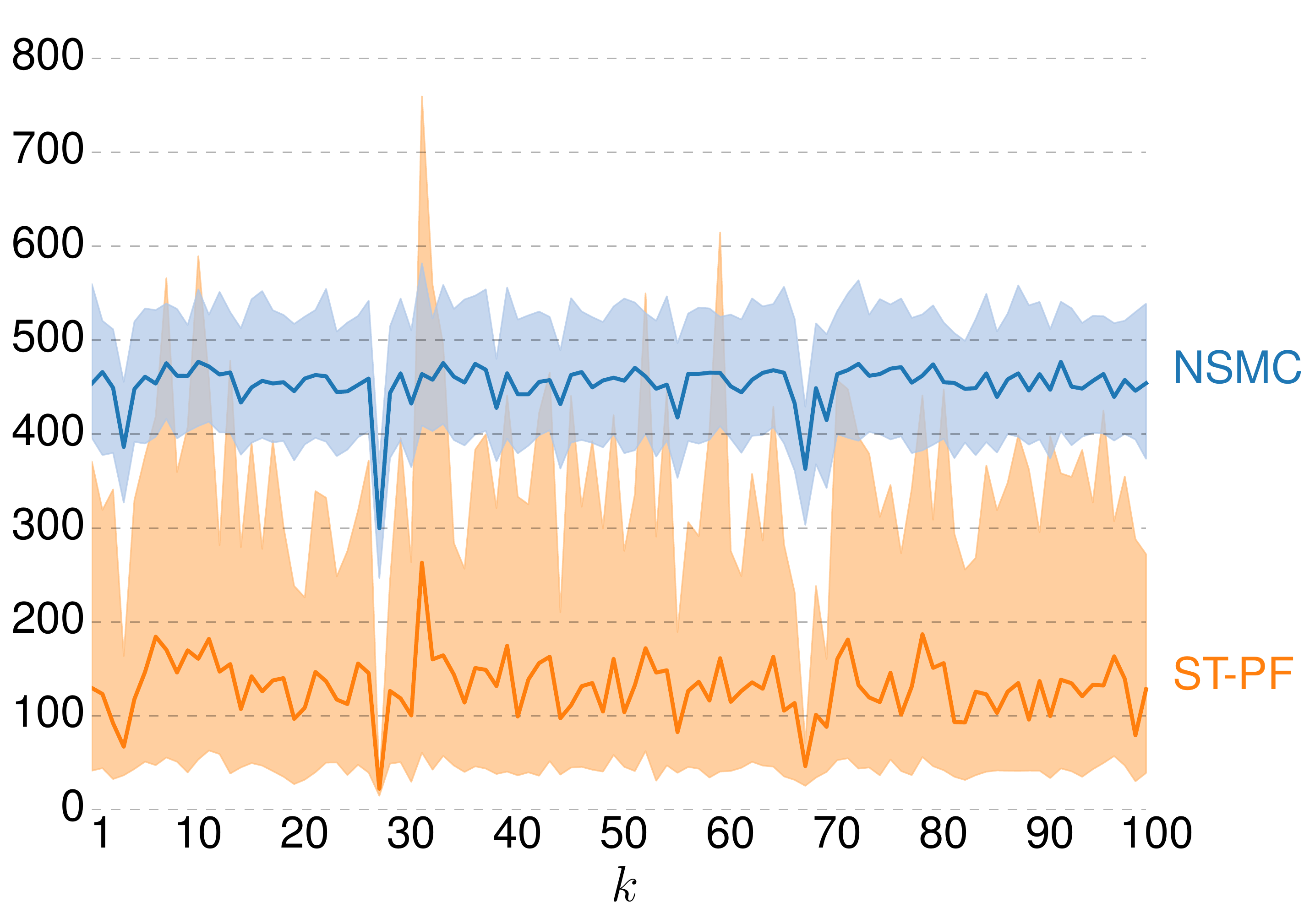} & \includegraphics[width=.3\textwidth]{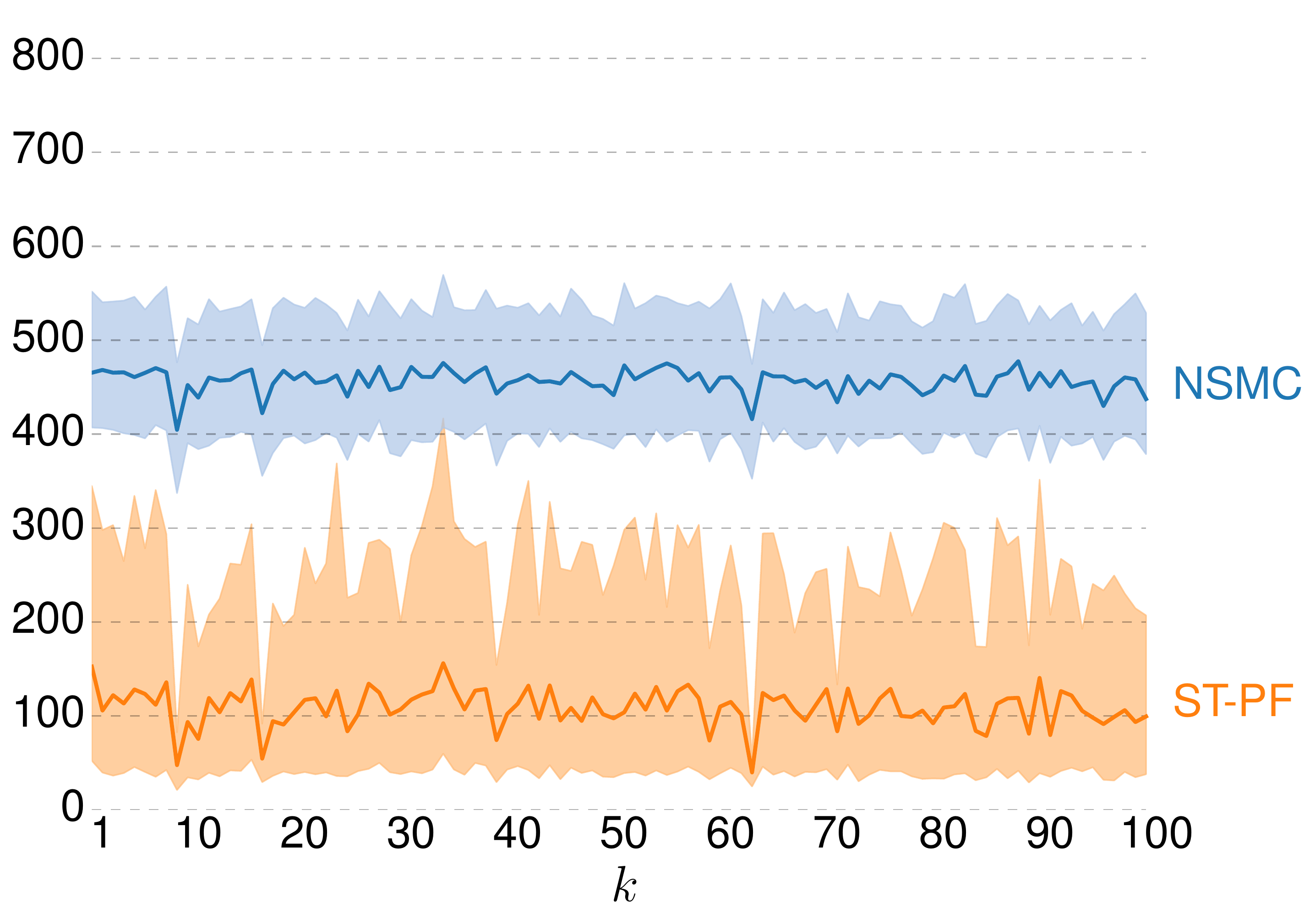} & \\
	 \rotatebox{90}{ERS} & & \includegraphics[width=.3\textwidth]{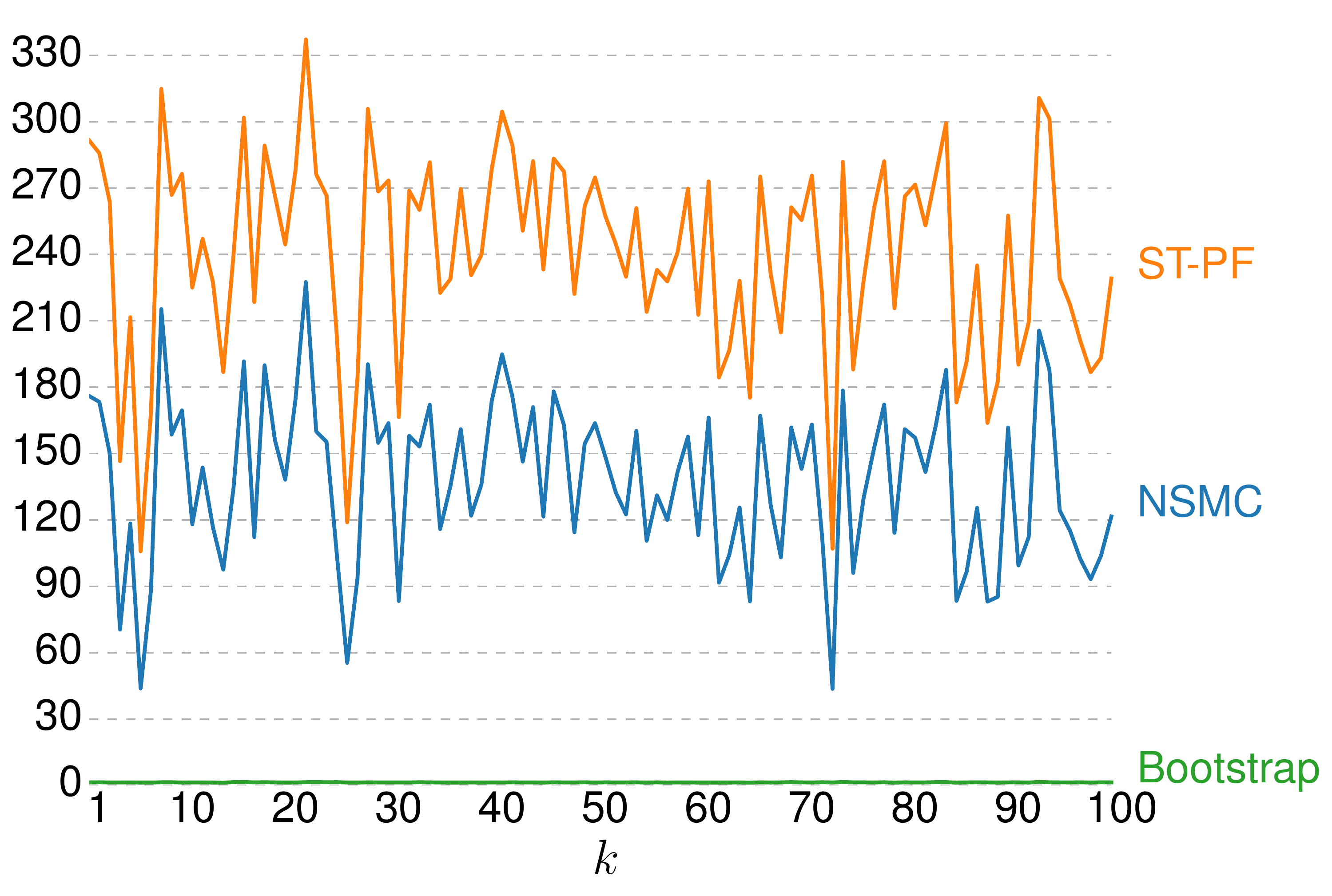} & \includegraphics[width=.3\textwidth]{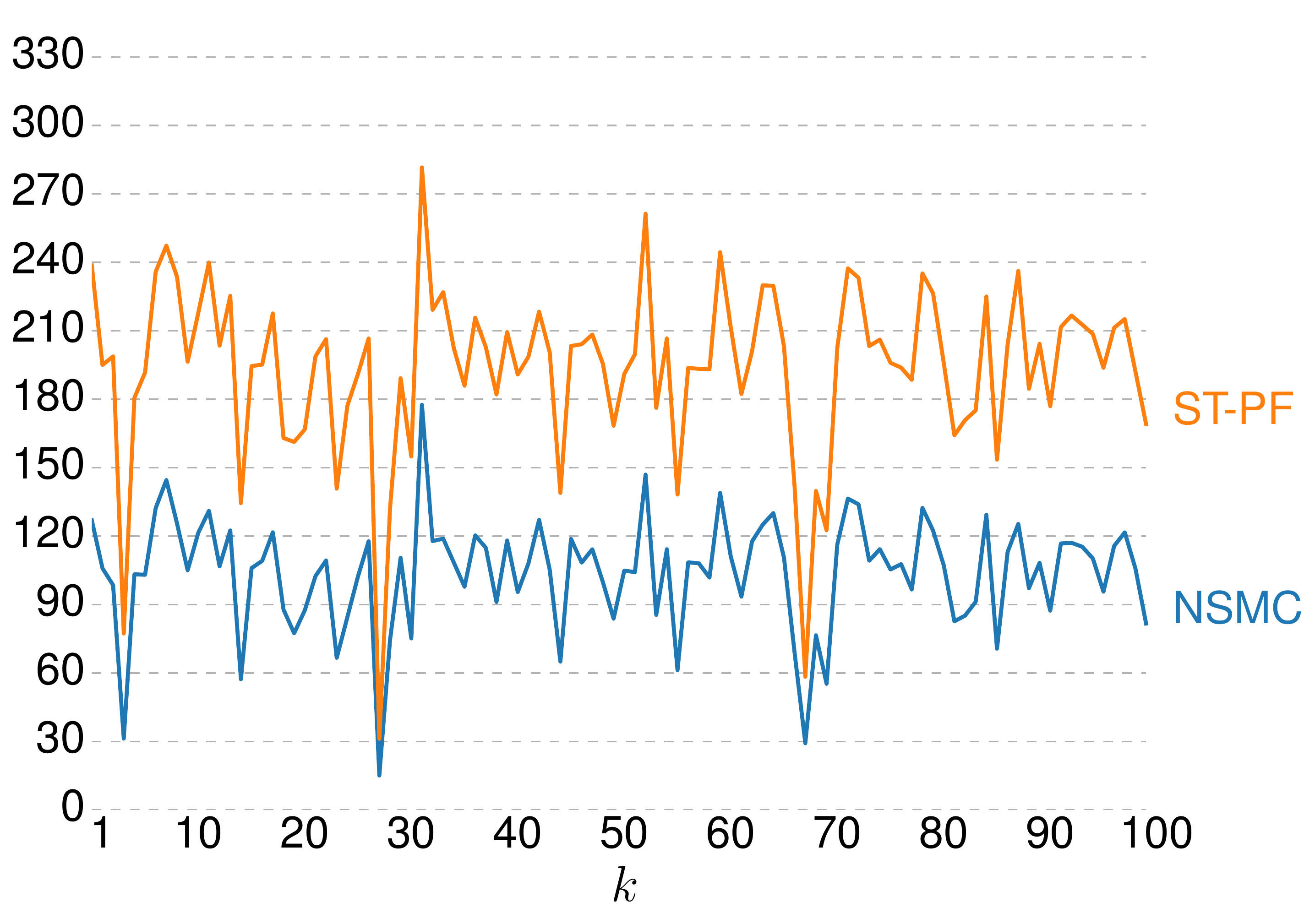} & \includegraphics[width=.3\textwidth]{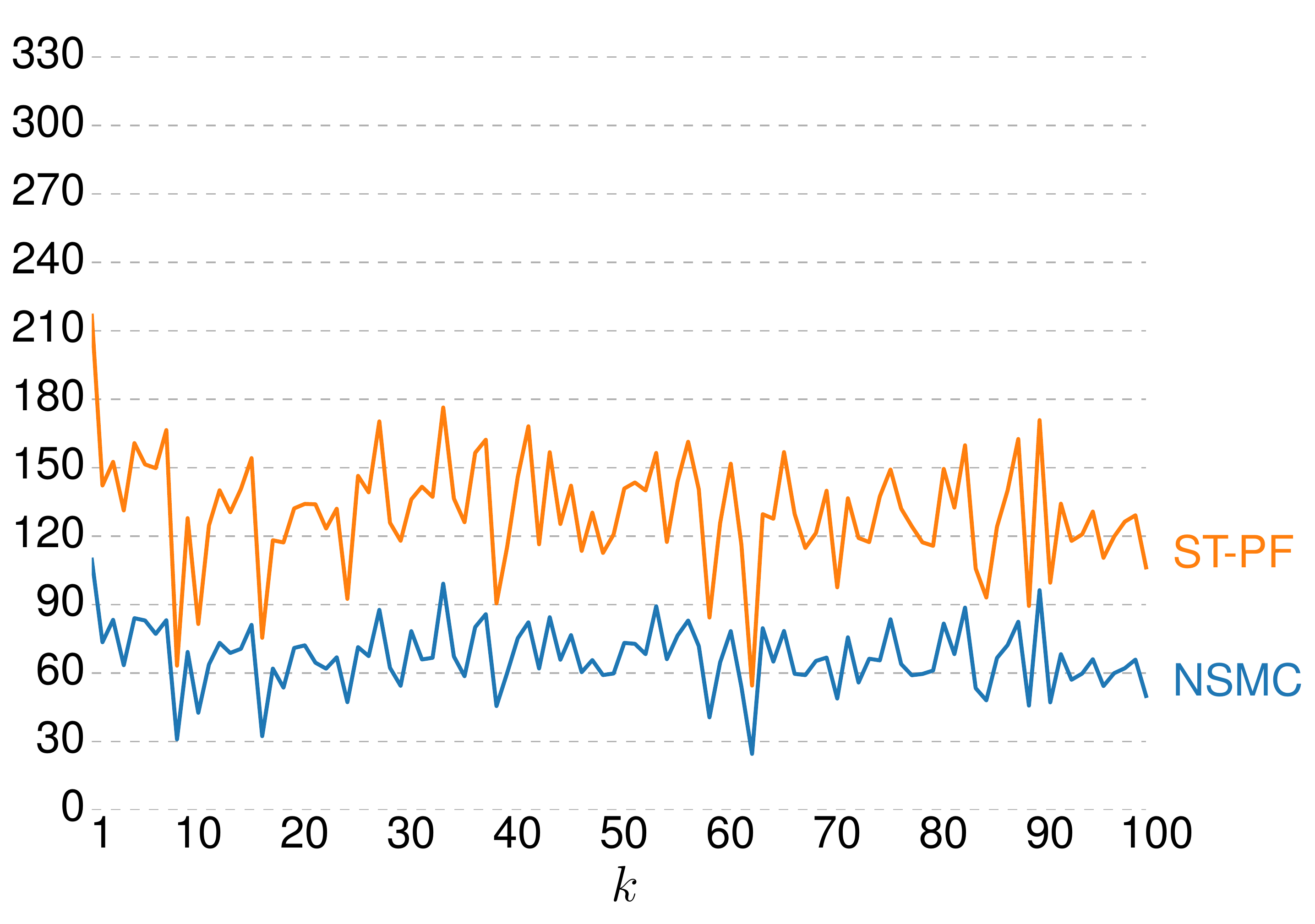} &
	\end{tabular}
	\caption{\emph{Top:} Median (over dimension) \ess~\eqref{eq:est:ess} and $15$--$85$\% percentiles (shaded region). \emph{Bottom:} The \ers~\eqref{eq:est:ers} based on the resampling weights in the (outermost) particle filter. The results are based on $100$ independent runs for the Gaussian \mrf with dimension $d$.}\label{fig:lgss}
\end{figure*}

\subsection{Gaussian State Space Model}
We start by considering a high-dimensional Gaussian state space model,
where we have access to the true solution through belief propagation. 
The latent variables and measurements $\{ X_{1:k}, Y_{1:k} \}$,
with $\{X_k, Y_k\} = \left\{ X_{k,l}, Y_{k,l} \right\}_{l=1}^d$, are modeled by
a $d \times k$ lattice Gaussian \mrf. The true data is simulated from a nearly identical state space model (see Appendix~\ref{sec:supp:expts:lgss}). 
We run a 2-level \nsmc sampler. The outer level is fully adapted, \ie the
proposal distribution is $q_k = p(x_k \mid x_{k-1}, y_k)$, which thus constitute the target distribution
for the inner level. 
To generate properly weighted samples from $q_k$, we use a bootstrap \pf
operating on the $d$ components of the vector $x_k$.
Note that we only use bootstrap proposals where the actual sampling takes place, and that
the conditional distribution $p(x_k \mid x_{k-1}, y_k)$ is not explicitly used.

We simulate data from this model for $k=1,\ldots,100$ for different values of $d = \text{dim}(x_k) \in \{ 50, 100, 200\}$.
The exact filtering marginals are computed using belief propagation.
We compare with both the \stpf and standard (bootstrap) \pf.

The results are evaluated based on the effective sample size (\ess, see \eg \citet{fearnheadWT2010a})
defined as,
\begin{align}
  \label{eq:est:ess}
  \text{ESS}(x_{k,l}) = \left( \E\left[ {\textstyle \frac{(\widehat{x}_{k,l} - \mu_{k,l})^2}{\sigma_{k,l}^{2}} } \right]\right)^{-1},
\end{align}
where $\widehat{x}_{k,l}$ denote the mean estimates and $\mu_{k,l}$ and $\sigma_{k,l}^{2}$ denote the true mean and variance of $x_{k,l}\mid y_{1:k}$ obtained from belief propagation. 
The expectation in~\eqref{eq:est:ess} is approximated by averaging over $100$ independent runs of the involved algorithms.
The \ess reflects the estimator accuracy, obvious by the definition which is tightly related to the
mean-squared-error. Intuitively the \ess corresponds to the equivalent number of \iid samples needed for the same accuracy. 

We also consider the effective resample size (\ers, \citet{KongLW:1994}), which is based on the resampling weights at the top levels in the respective \smc algorithms, 
\begin{align}
  \textstyle 
  \text{ERS} = \frac{\left(\sum_{i=1}^N \widehat Z_{q_k}^i\right)^2}{\sum_{i=1}^N \left(\widehat Z_{q_k}^i\right)^2}.\label{eq:est:ers}
\end{align}
The \ers is an estimate of the effective number of unique particles (or particle systems in the case of \stpf) available at each resampling step.

We use $N = 500$ and $M = 2\cdot d$ for \nsmc and match the computational time for \stpf and bootstrap \pf.
We report the results in Figure~\ref{fig:lgss}. The bootstrap \pf is omitted from $d=100$, $200$ due to its poor performance already for $d=50$ (which is to be expected).
Each dimension $l=1,\ldots,d$ provides us with a value of the \ess, so we present the median (lines)
and $15$--$85$\% percentiles (shaded regions) in the first row of Figure~\ref{fig:lgss}. 
The \ers is displayed in the second row of Figure~\ref{fig:lgss}. Note that \ess gives a better reflection of estimation accuracy than \ers.

We have conducted additional experiments with different model parameters and different choices for
$N$ and $M$ (some additional results are given in Appendix~\ref{sec:supp:expts:lgss}). Overall the results seem to be in agreement with the ones presented here, however \stpf seems to be more robust to the trade-off between $N$ and $M$. A rule-of-thumb for \nsmc is to generally try to keep $N$ as high as possible, while still maintaining a reasonably large \ers.

\subsection{Non-Gaussian State Space Model}
Next, we consider an example with a non-Gaussian \ssm, borrowed from \citet{beskosCJKZ2014a} where the full details of the model are given.
The transition probability $p(x_k \mid x_{k-1})$ is a localised Gaussian mixture and the measurement probability $p(y_k \mid x_k)$ is t-distributed. The model dimension is $d=1\thinspace024$.
  \citet{beskosCJKZ2014a} report improvements for \stpf over both the bootstrap \pf
and the block \pf by \citet{rebeschiniH2015can}. 
 We use $N = M = 100$
for both \stpf and \nsmc (the special structure of this model implies that there is no significant computational overhead from 
\begin{figure}[h]
\begin{center}
    \includegraphics[width=0.5\columnwidth]{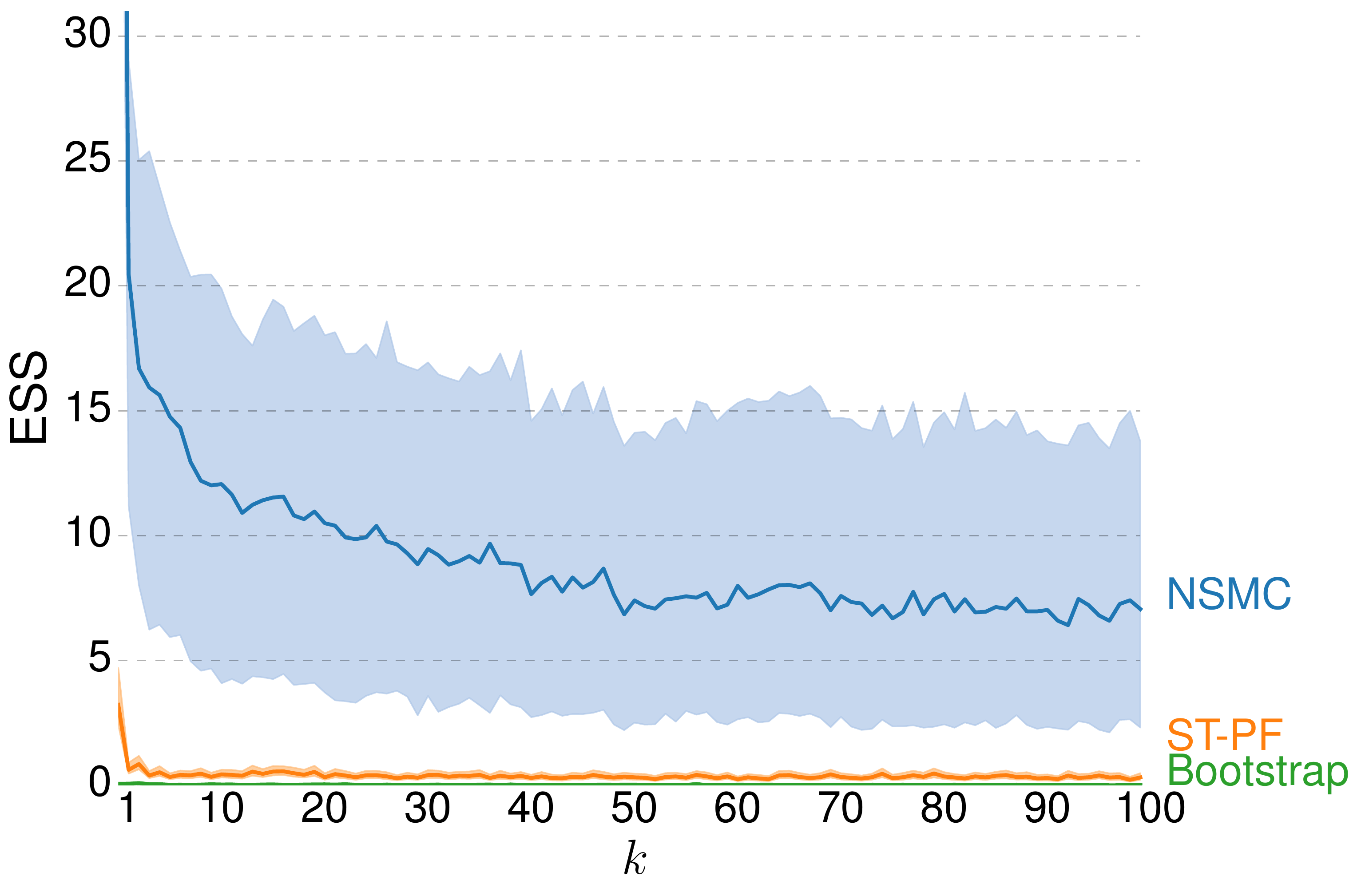}
  \end{center}
\caption{Median \ess with $15-85\%$ percentiles (shaded region) for the non-Gaussian \ssm.}
\label{fig:nless}
\end{figure}
running backward sampling) and the bootstrap \pf is given $N=\thsnd{10}$. In Figure~\ref{fig:nless} we report the \ess~\eqref{eq:est:ess}, estimated according to \citet{CarpenterCF:1999}. The \ess for the bootstrap \pf is close to $0$, for \stpf around 1--2,
and for \nsmc slightly higher at 7--8. However, we note that all methods perform quite poorly on this model,
and to obtain satisfactory results it would be necessary to use more particles.

\subsection{Spatio-Temporal Model -- Drought Detection}\label{sec:drought}
In this final example we study the problem of detecting droughts based on measured precipitation data \citep{droughtdata} for different locations on earth. We look at the situation in North America during the years $1901$--$1950$ and the Sahel region in Africa during the years $1950$--$2000$. These spatial regions and time frames were chosen since they include two of the most devastating droughts during the last century, the so-called Dust Bowl in the US during the 1930s \citep{SchubertSPKB:2004} and the decades long drought in the Sahel region in Africa starting in the 1960s \citep{FoleyCSW:2003,HoerlingHEP:2006}.
\begin{figure}[h]
  \centering
  \resizebox{0.4\columnwidth}{!} {
  \tikzstyle{edge} = [dotted, thick]
\tikzstyle{edge2} = [-,very thick]
\tikzstyle{edge3} = [-]
\tikzstyle{arrw} = [very thick,shorten <=2pt,shorten >=2pt]
\tikzstyle{var} = [draw,circle,inner sep=0,minimum width=0.5cm]
\tikzstyle{bigvar} = [draw,circle,inner sep=0,minimum width=0.8cm,very thick]
\tikzstyle{obs} = [draw,circle,inner sep=0,minimum width=0.3cm, fill=black!20]
  \begin{tikzpicture}[>=stealth,node distance=0.3cm]
    \begin{scope}
	  \node at (-0.6,4) (x04) [circle] {$\cdots$};
	  \node at (0.7,4) (x14) [bigvar] {$X_{k-1}$};
	  \node at (2.2,4.7) {$\displaystyle \underset{\rightarrow}{N}$};
	  \node at (2.2,4) (x24) [bigvar] {$X_{k}$};
	  \node at (3.7,4) (x34) [bigvar] {$X_{k+1}$};
      \node at (5,4) (x44) [circle] {$\cdots$};
      \draw[edge2] (x04) -- (x14);
      \draw[edge2] (x14) -- (x24);
      \draw[edge2] (x24) -- (x34);
      \draw[edge2] (x34) -- (x44);
      \draw[dotted, thick] (-0.8,2.7) -- (x24);
      \draw[edge] (-0.8,2.7) -- (x24);
      \draw[edge] (4.9,2.7) -- (x24);
      
      \foreach \x in {0,2,4} {
        \foreach \y in {0,2} {
          \node at (\x,\y) (x\x\y) [var] {};
        }
      }
      \foreach \x in {0,2,4} {
        \foreach \y in {0,2} {
          \node (y\x\y) [obs,below right=of x\x\y] {};
        }
      }
      \foreach \x in {0,2,4} {
        \foreach \y in {0,2} {
          \draw[edge3] (x\x\y) -- (y\x\y);
        }
      }

      \foreach \x in {0,2} {
        \pgfmathtruncatemacro\xend{\x+2}
        \foreach \y in {0,2} {
          \draw[edge3] (x\x\y) -- (x\xend\y);
        }
      }
      \node at (2.2,3.2) {$\displaystyle \underset{\rightarrow}{M_1}$};
      \foreach \x in {0,2,4} {
        \foreach \y in {0} {
          \pgfmathtruncatemacro\yend{\y+2}
          \path[draw, -, thick] (x\x\y) -- node[auto, swap] {\scriptsize $\downarrow M_2$}  (x\x\yend);
        }
      }

      \node[draw,very thick,rectangle,rounded corners=3mm,minimum width=1.1cm,fit=(x00) (y00) (x02) (y02),label=above:{$X_{k,1:2,1}$}] (x0){};
      \node[draw,very thick,rectangle,rounded corners=3mm,minimum width=1.1cm,fit=(x20) (y20) (x22) (y22),label=above:{$X_{k,1:2,2}$}] (x2){};
      \node[draw,very thick,rectangle,rounded corners=3mm,minimum width=1.1cm,fit=(x40) (y40) (x42) (y42),label=above:{$X_{k,1:2,3}$}] (x4){};
      
      \foreach \x in {0,2} {
        \pgfmathtruncatemacro\xend{\x+2}
          \path [draw, -, very thick] (x\x) -- (x\xend);
      }
    \end{scope}
  \end{tikzpicture}
    }
  \caption{Illustration of the three-level \nsmc.}
  \label{fig:NSMCstructure4Draught}
\end{figure}
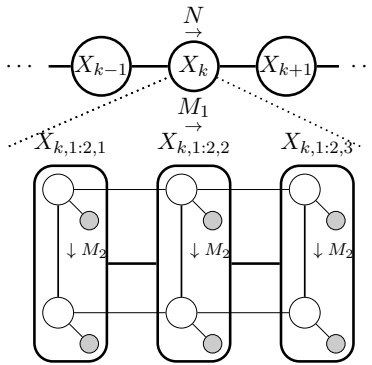
We consider the spatio-temporal model defined by \citet{fuBLS2012drought} and compare with the results therein. Each location in a region is modelled to be in either a \emph{normal} state $0$ or in an \emph{abnormal} state $1$ (drought). Measurements are given by precipitation (in millimeters) for each location and year. 
At every time instance $k$ our latent structure is described by a rectangular $2$D grid $X_k = \{X_{k,i,j}\}_{i=1,j=1}^{I,J}$; in essence this is the model showcased in Figure~\ref{fig:droughtmodel}. \citet{fuBLS2012drought} considers the problem of finding the maximum aposteriori configuration, using a linear programming relaxation. We will instead compute an approximation of the full posterior filtering distribution $\target_k(x_k) = p(x_k \mid y_{1:k})$.
\begin{figure*}[tb]
\centering
	\begin{tabular}{m{.16\textwidth} m{.16\textwidth} m{.16\textwidth} m{.16\textwidth} m{.16\textwidth} m{.16\textwidth}}
	\multicolumn{3}{c}{\includegraphics[width=.45\textwidth]{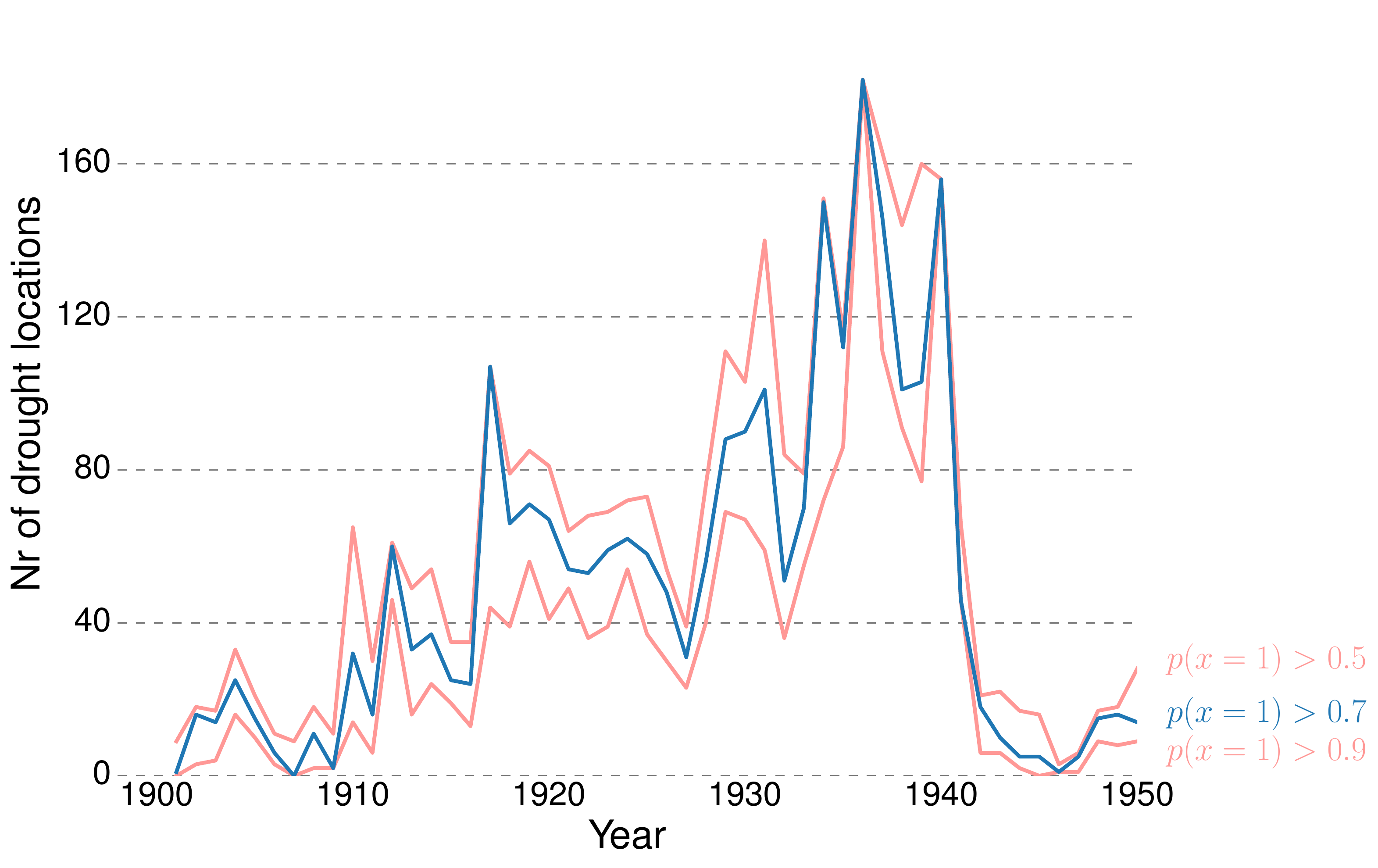}} & \multicolumn{3}{c}{\includegraphics[width=.45\textwidth]{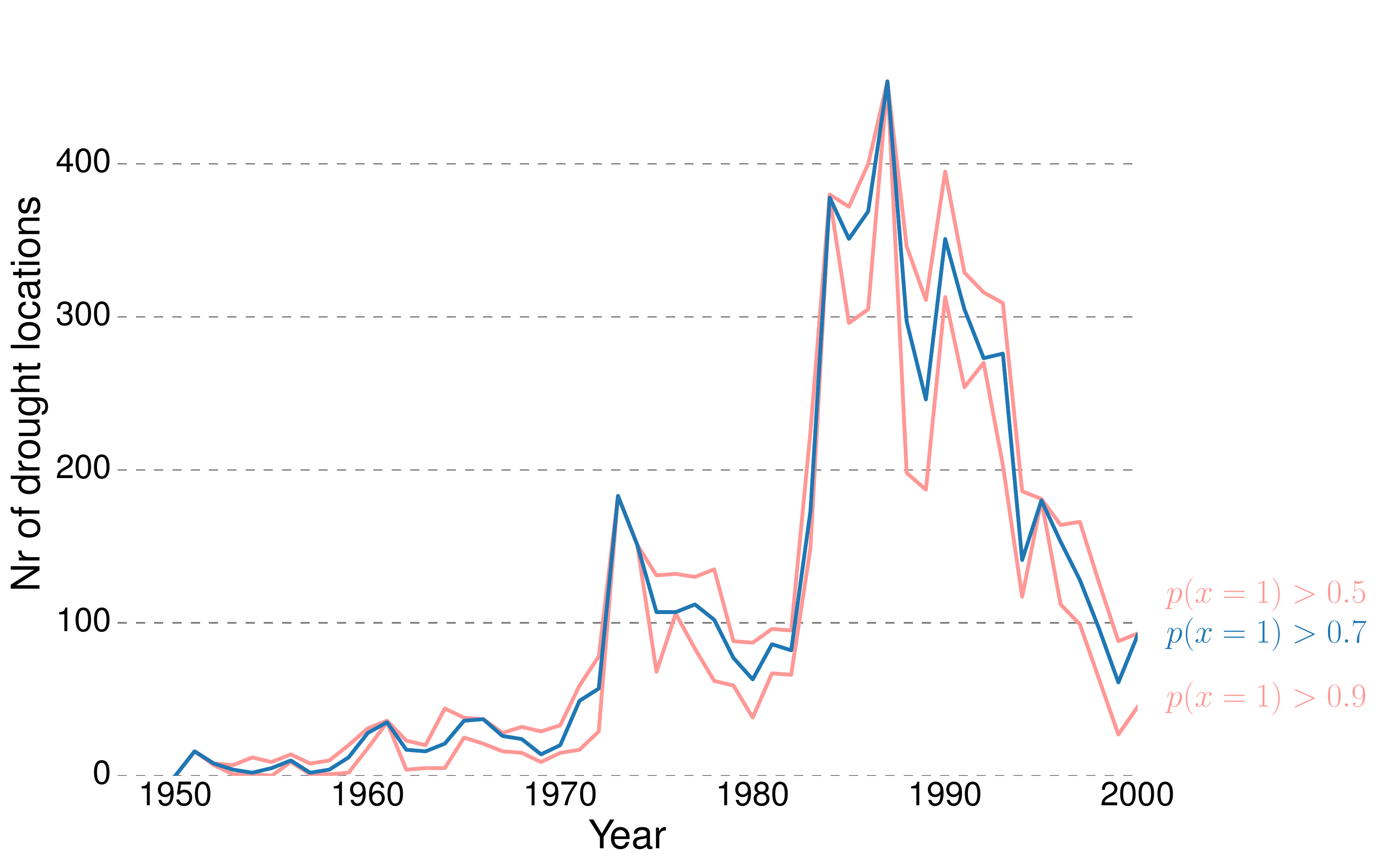}}\\
	\multicolumn{3}{c}{\small North America region} & \multicolumn{3}{c}{\small Sahel region} \\
	\multicolumn{2}{c}{\includegraphics[width=.3\textwidth]{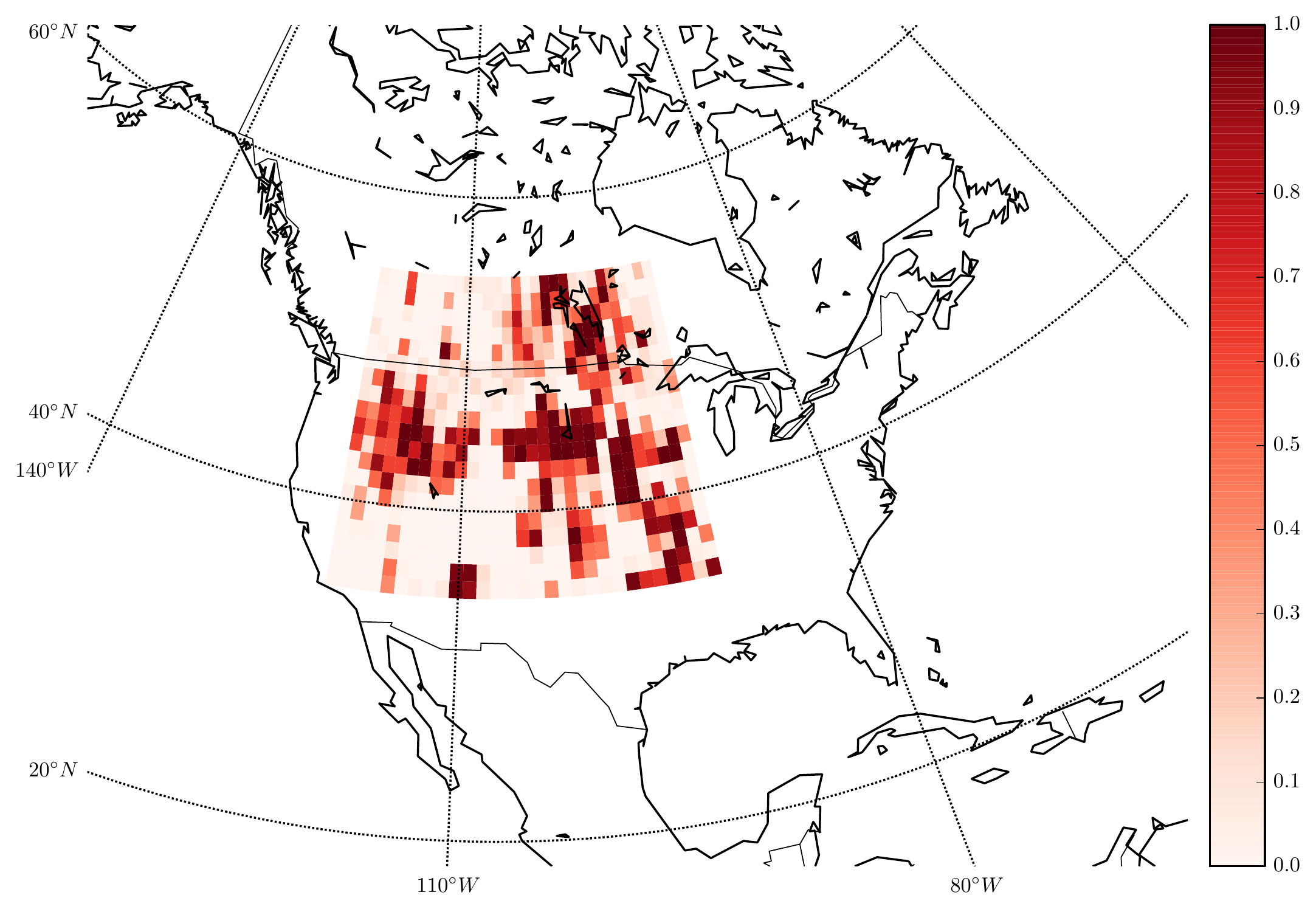}} & \multicolumn{2}{c}{\includegraphics[width=.3\textwidth]{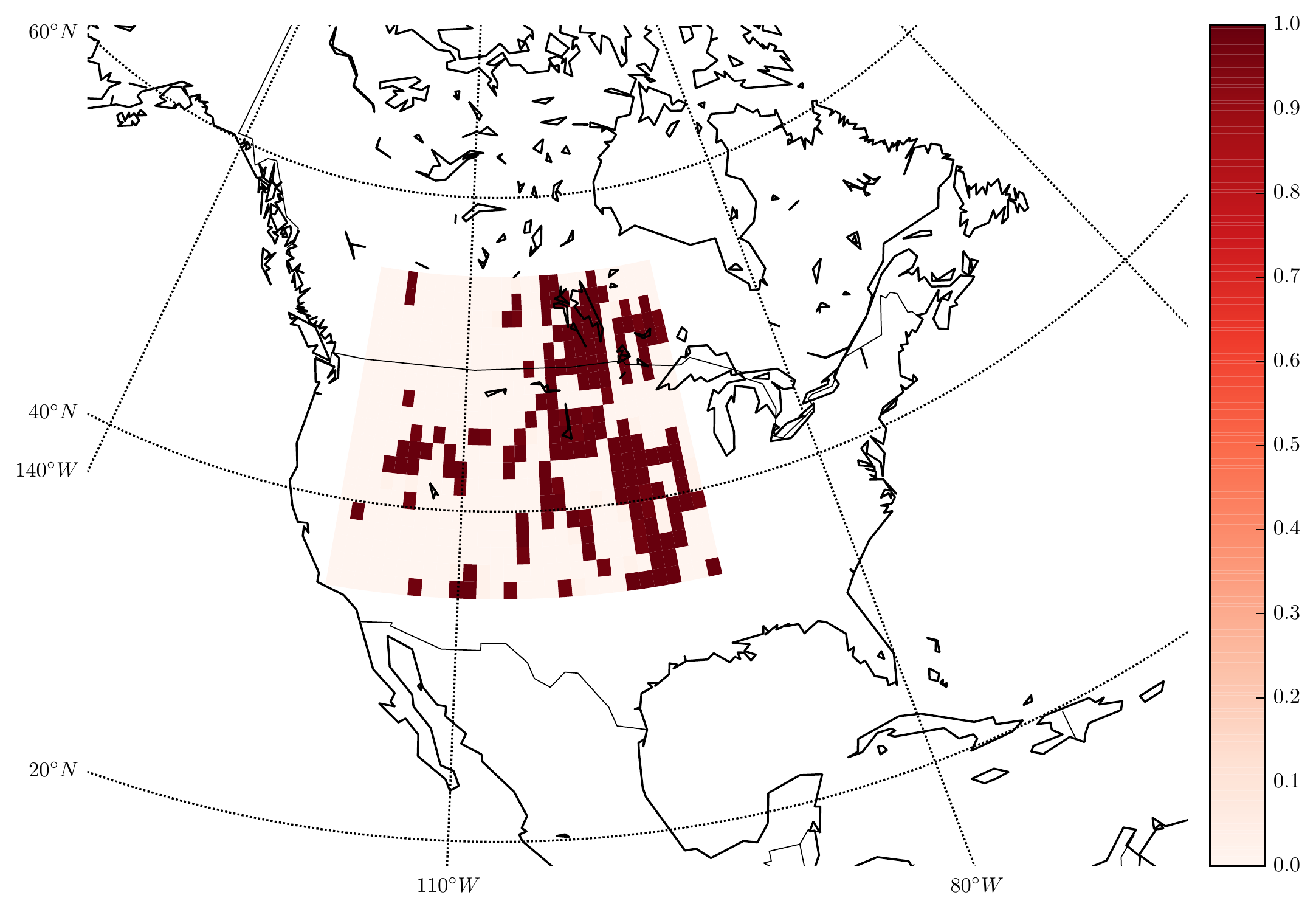}} & \multicolumn{2}{c}{\includegraphics[width=.3\textwidth]{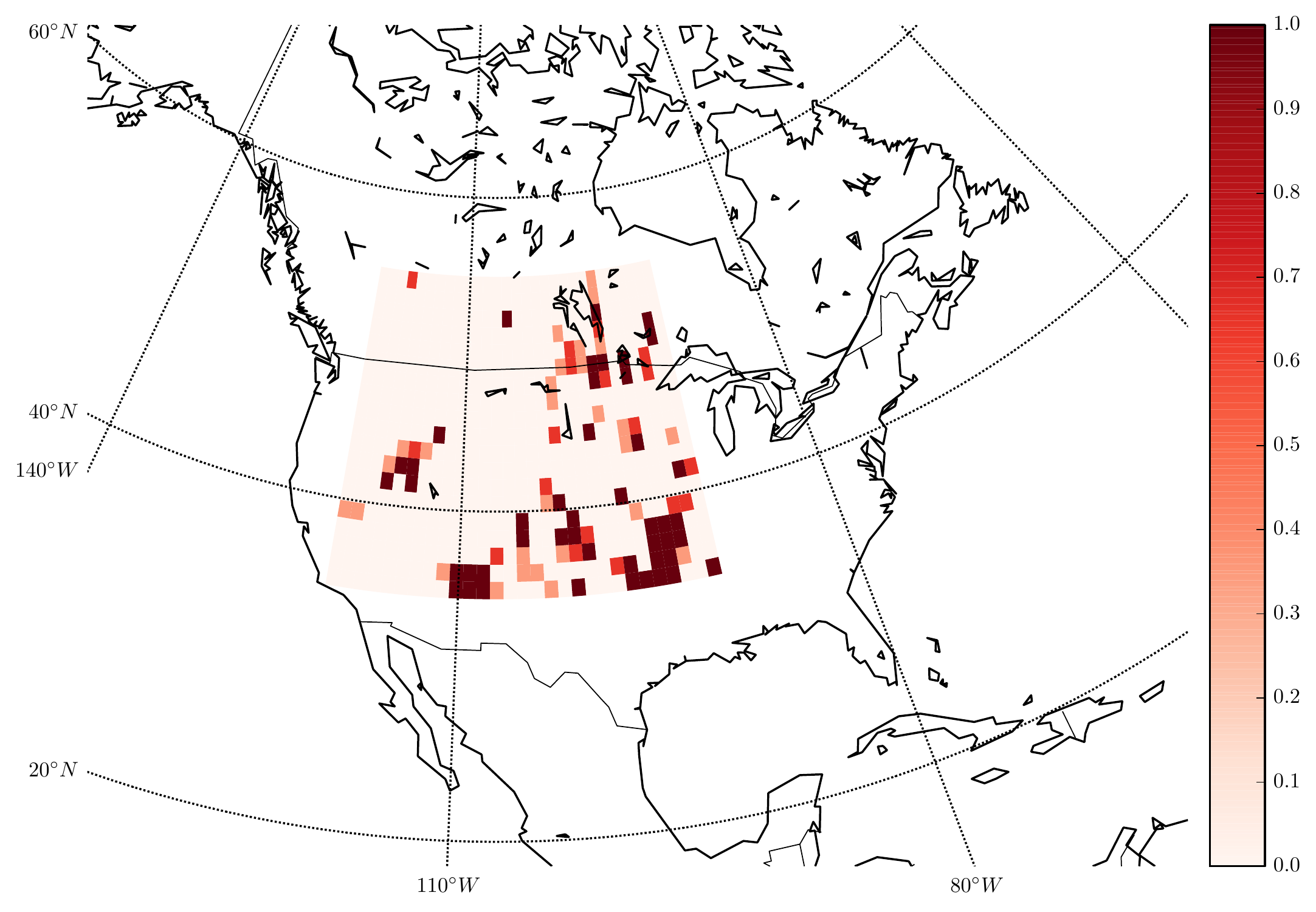}}\\
	\multicolumn{2}{c}{\small North America $1939$} & \multicolumn{2}{c}{\small North America $1940$} & \multicolumn{2}{c}{\small North America $1941$}
	\end{tabular}
	\caption{\emph{Top:} Number of locations with estimated $p(x=1) > \{0.5,0.7,0.9\}$ for the two regions. \emph{Bottom:} Estimate of $p(x_{t,i}=1)$ for all sites over a span of 3 years. All results for $\Np = 100, \Np_1 = \{30,40\}, \Np_2 = 20$.}\label{fig:drought}
\end{figure*}
The rectangular structure is used to instantiate an \nsmc method that on the first level targets the full posterior filtering distribution. To sample from $X_k$ we run, on the second level, an \nsmc procedure that operates on the ``columns'' $X_{k,1:I,j}$, $j = \range{1}{J}$. Finally, to sample each column $X_{k,1:I,j}$ we run a third level of \smc, that operates
on the individual components $X_{k,i,j}$, $i = \range{1}{I}$, using a bootstrap proposal.
The structure of our \nsmc method applied to this particular problem is illustrated in Figure~\ref{fig:NSMCstructure4Draught}.

Figure~\ref{fig:drought} gives the results on the parts of North America 
that we consider. 
The first row shows the number of locations where the estimate of 
 $p(x_{k,i,j}=1)$ exceeds $\{0.5, 0.7, 0.9\}$, for both regions. These results seems to be in agreement with \citet[Figures 3, 6]{fuBLS2012drought}. 
However, we also receive an approximation of the full posterior and can visualise uncertainty in our estimates,
as illustrated by the three different levels of posterior probability for drought.
In general, we obtain a rich sample diversity from the posterior distribution.
However, for some problematic years the sampler degenerates, with the result
that the three credibility levels all coincide. This is also visible in the second row of Figure~\ref{fig:drought},
where we show the posterior estimates $p(x_{k,i,j} \mid y_{1:k})$ for the years 1939--1941, overlayed on the regions of interest. For year 1940 the sampler degenerates and only reports 0-1 probabilities for all sites.
Naturally, one way to improve the estimates is to run the sampler with a larger number of particles, which has been
kept very low in this proof-of-concept.


\section{Conclusions}\label{sec:conclusions}
We have shown that a straightforward \nsmc implementation with fairly few particles can attain reasonable approximations to the filtering problem for dimensions 
in the order of hundreds, or even thousands.
This means that \nsmc methods takes the \smc framework an important step closer to being viable for high-dimensional statistical inference problems. However, \nsmc is not a silver bullet for solving high-dimensional inference problems, and the approximation accuracy will be highly model dependent.
Hence, much work remains to be done, for instance on combining \nsmc with other techniques for high-dimensional inference such as localisation \citep{rebeschiniH2015can}
and annealing \citep{beskosCJ2014on}, in order to solve even more challenging problems.

\section*{Acknowledgments} 
This work was supported by the projects: \emph{Learning of complex dynamical systems} (Contract
number: 637-2014-466) and \emph{Probabilistic modeling of dynamical systems} (Contract number:
621-2013-5524), both funded by the Swedish Research Council.

\clearpage
\appendix
\section{Appendix}

In this appendix we start out in Section~\ref{sec:supp:nsmc} by providing a more
general formulation of the \nsmc method and proofs of the central limit and proper weighting
theorems of the main manuscript. We also detail (Section~\ref{sec:supp:nis}) a straightforward
extension of nested \is to a sequential version. We show that a special case of this nested
sequential \is turns out to be more or less equivalent to the importance sampling squared algorithm
by \citet{tranSPK2013importance}. This relationship serves as evidence that illustrates that the
\nsmc framework being more widely applicable than the scope of problems considered in this
article. Finally, in Section~\ref{sec:supp:expts} we give more details and results on the
experiments considered in the main manuscript.

\subsection{Nested Sequential Monte Carlo}\label{sec:supp:nsmc}
We start by presenting a general formulation of a nested auxiliary \smc sampler in
Algorithm~\ref{alg:app:nested-smc}.  In this formulation, $q_k(x_k \mid x_{1:k-1})$ is an arbitrary
(unnormalised) proposal, normalised by 
\[
Z_{q_k}(x_{1:k-1}) = \int q_k(x_k \mid x_{1:k-1}) \myd x_k.
\]
Furthermore, the resampling weights are obtain by multiplying the importance weights with the
arbitrary \emph{adjustment multipliers} $\nu_{k-1}(x_{1:k-1}, Z_{q_k})$, which may depend on both the
state sequence $x_{1:k-1}$ and the normalising constant (estimate). The fully adapted \nsmc sampler
(Algorithm~\ref{alg:nsmc:nested-fapf} in the main document) is obtained as a special case if we
choose
\begin{align*}
  q_k (x_k \mid x_{1:k-1}) = \frac{\utarget_k(x_{1:k}) }{ \utarget_{k-1}(x_{1:k-1}) }
\end{align*}
and $\nu_{k-1}(x_{1:k-1}, Z_{q_k}) = Z_{q_k}$, in which case the importance weights are indeed given by $W_k^i \equiv 1$.

\begin{algorithm}[tb]
  \caption{Nested SMC (auxiliary SMC formulation)}
  \label{alg:app:nested-smc}
  \begin{enumerate}
  \item Set $\{ X_0^i \}_{i=1}^\Np$ to arbitrary dummy variables.
    Set $W_0^i = 1$ for $i=\range{1}{\Np}$.
    Set $\widehat Z_{\pi_0} = 1$.
  \item \textbf{for $k=1$ to $n$}
    \begin{enumerate}
    \item\label{step:app:nsmc:init-q} Initialise $\Obj^j = \Class(q_k(\cdot \mid X_{1:k-1}^j), M)$ for $j = \range{1}{\Np}$.
    \item\label{step:app:nsmc:get-Z} Compute $\widehat Z_{q_k}^j = \Obj^j.\GetZ()$ for $j = \range{1}{\Np}$.
    \item Compute $\widehat \nu_{k-1}^j = \nu_{k-1}(X_{1:k-1}^j, \widehat Z_{q_k}^j)$  for $j = \range{1}{\Np}$.
    \item\label{step:app:nsmc:resample} Draw $m_k^{1:\Np}$ from a multinomial distribution with probabilities
      \(\displaystyle \frac{\widehat\nu_{k-1}^j W_{k-1}^j}{ \sum_{\ell=1}^\Np \widehat\nu_{k-1}^\ell W_{k-1}^\ell} \)
      for $j = \range{1}{\Np}$. 
    \item Set $L \gets 0$
    \item \textbf{for $j=1$ to $\Np$}
      \begin{enumerate}
      \item\label{step:app:nsmcs:Prop} Compute $X_k^i = \Obj^j.\Simulate()$ and let $X_{1:k}^i = (X_{1:k-1}^j, X_{k}^i)$
        for $i = \range{L+1}{L+m_k^j}$.
      \item\label{step:app:nsmcs:Wght} Compute \({\displaystyle W_k^i = \frac{ \pi_k( X_{1:k}^i)  }{ \pi_{k-1}(X_{1:k-1}^j) } 
        \frac{ \widehat Z_{q_k}^j }{ \widehat\nu_{k-1}^j q_k(X_k^i \mid X_{1:k-1}^j) }}\)
        for $i = \range{L+1}{L+m_k^j}$.
      \item \textbf{delete} $\Obj^j$.
      \item Set $L \gets L+m_k^j$.
      \end{enumerate}
    \item Compute $\widehat Z_{\pi_k} = \widehat Z_{\pi_{k-1}}\times\left\{ \frac{1}{\Np}\sum_{j=1}^\Np \widehat\nu_{k-1}^j W_{k-1}^j \right\} \times \left\{ (\sum_{j=1}^\Np W_k^j)/(\sum_{j=1}^\Np W_{k-1}^j) \right\}.$
    \end{enumerate}
  \end{enumerate}
\end{algorithm}

\subsubsection{Nested \smc is \smc}\label{sec:supp:nsmc-is-smc}
The validity of Algorithm~\ref{alg:app:nested-smc} can be established by interpreting the
algorithm as a standard \smc procedure for a sequence of extended target distributions.
If $\widehat Z_{q_k}$ is computed deterministically, proper weighting (\ie, unbiasedness)
ensures that $\widehat Z_{q_k} = Z_{q_k}$ and it is evident that the algorithm reduces to a standard
\smc sampler. Hence, we consider the case when the normalising constant estimates $\widehat Z_{q_k}$ are random.

For $k = \range{1}{n+1}$,
let us introduce the random variable $U_{k-1}$ which encodes the complete internal state of the object
$\Obj$ generated by $\Obj = \Class(q_k(\cdot \mid x_{1:k-1}), M)$. Let
the distribution of $U_{k-1}$ be denoted as $\bar\psi_{k-1}^M (u_{k-1} \mid x_{1:k-1})$.
To put Algorithm~\ref{alg:app:nested-smc} into a standard (auxiliary) \smc framework,
we shall interpret steps \ref{step:app:nsmc:init-q}--\ref{step:app:nsmc:get-Z} of
Algorithm~\ref{alg:app:nested-smc} as being the last two steps carried out
during iteration $k-1$, rather than the first two steps carried out during iteration $k$.
This does not alter the algorithm \emph{per se}, but it results in that the resampling
step is conducted first at each iteration, which is typically the case for standard auxiliary
\smc formulations.

The estimator of the normalising constant is computable from the internal state of $\Obj$,
so that we can introduce a function $\tau_k$ such that $\widehat Z_{q_k} = \tau_k(U_{k-1})$.
Furthermore, note that the simulation of $X_k$ via $X_k = \Obj.\Simulate()$ is based
solely on the internal state $U_{k-1}$, and denote by $\bar\gamma_{k}^M(x_k \mid U_{k-1})$
the distribution of $X_k$.

\begin{lemma}
  \label{lem:app:proper-1}
  Assume that $\Class$ satisfies condition \asmpref{asmp:proper} in the main manuscript. Then,
  \begin{align*}
    \int \tau_k(u_{k-1}) \bar\gamma_k^M(x_k \mid u_{k-1}) \bar\psi_{k-1}^M(u_{k-1} \mid x_{1:k-1}) \myd u_{k-1}
    = q_k(x_k \mid x_{1:k-1}).
  \end{align*}
\end{lemma}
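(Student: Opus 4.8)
The plan is to prove the identity by testing both sides against an arbitrary measurable function $f$, which reduces the whole statement to a single application of the proper weighting property \asmpref{asmp:proper}. The left-hand side, read as a function of $x_k$, is the joint law of the pair $(U_{k-1}, X_k)$ marginalised over the internal state $u_{k-1}$ and reweighted by the normalising-constant estimate $\tau_k(u_{k-1}) = \widehat Z_{q_k}$; integrating it against $f(x_k)$ will therefore produce exactly $\E\left[f(X_k)\widehat Z_{q_k}\right]$, which \asmpref{asmp:proper} evaluates for us.

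First I would record that, by the way $U_{k-1}$ was introduced, the joint density of $(U_{k-1}, X_k)$ factorises as $\bar\gamma_k^M(x_k \mid u_{k-1})\,\bar\psi_{k-1}^M(u_{k-1} \mid x_{1:k-1})$: constructing $\Obj = \Class(q_k(\cdot \mid x_{1:k-1}), M)$ produces $U_{k-1} \sim \bar\psi_{k-1}^M(\cdot \mid x_{1:k-1})$, and the subsequent call $X_k = \Obj.\Simulate()$ draws $X_k$ from $\bar\gamma_k^M(\cdot \mid U_{k-1})$, depending on $\Obj$ only through its state. Multiplying by $f(x_k)$ and integrating over both variables then gives $\E\left[f(X_k)\tau_k(U_{k-1})\right] = \E\left[f(X_k)\widehat Z_{q_k}\right]$.

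Next I would interchange the order of integration (Tonelli applies, since $\tau_k \geq 0$ by the $\widehat Z_q \geq 0$ part of \asmpref{asmp:proper}) and write the double integral as $\int f(x_k)\, g(x_k)\,\myd x_k$, where $g(x_k)$ denotes the left-hand side of the asserted identity. On the other hand, \asmpref{asmp:proper} says precisely that $(X_k, \widehat Z_{q_k})$ is properly weighted for $q_k(\cdot \mid x_{1:k-1})$, so $\E\left[f(X_k)\widehat Z_{q_k}\right] = \int f(x_k)\, q_k(x_k \mid x_{1:k-1})\,\myd x_k$. Equating the two expressions yields $\int f(x_k)\big(g(x_k) - q_k(x_k \mid x_{1:k-1})\big)\,\myd x_k = 0$ for every measurable $f$, whence $g = q_k(\cdot \mid x_{1:k-1})$ almost everywhere, which is the claim.

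The only delicate points are measure-theoretic bookkeeping: the legitimacy of the interchange (harmless by nonnegativity) and the passage from equality of integrals against all test functions to equality of the densities themselves. I do not expect a genuine obstacle here — the entire substance of the lemma is the one invocation of proper weighting, and the remaining work is just unwinding the definitions of $\tau_k$, $\bar\gamma_k^M$ and $\bar\psi_{k-1}^M$ so that $\E\left[f(X_k)\widehat Z_{q_k}\right]$ is exhibited as the integral of $f$ against the left-hand side.
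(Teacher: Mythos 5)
Your proof is correct and follows essentially the same route as the paper: both identify $\bar\gamma_k^M(x_k\mid u_{k-1})\,\bar\psi_{k-1}^M(u_{k-1}\mid x_{1:k-1})$ as the joint law of $(X_k,U_{k-1})$, express $\E[f(X_k)\,\tau_k(U_{k-1})\mid x_{1:k-1}]$ as the double integral, invoke proper weighting to equate it with $\int f(x_k)\,q_k(x_k\mid x_{1:k-1})\,\myd x_k$, and conclude by arbitrariness of $f$. Your added remarks on Tonelli and the almost-everywhere identification of densities are harmless elaborations of what the paper leaves implicit.
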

\begin{proof}
  The pair $(X_k, \tau_k(U_{k-1}))$ are properly weighted for $q_k$. Hence,
  for a measurable function $f$,
  \begin{multline*}
    \E[  f(X_k) \tau_k(U_{k-1}) \mid x_{1:k-1}]
    = \iint f(x_k)\tau_k(u_{k-1})
    \bar\gamma_k^M(x_k \mid u_{k-1}) \bar\psi_{k-1}^M(u_{k-1} \mid x_{1:k-1}) \myd u_{k-1}\myd x_k \\
    = Z_{k}(x_{1:k-1}) \int f(x_k) \bar q_k(x_k \mid x_{1:k-1}) \myd x_k
    = \int f(x_k) q_k(x_k \mid x_{1:k-1}) \myd x_k.
  \end{multline*}
  Since $f$ is arbitrary, the result follows.
\end{proof}

We can now define the sequence of (unnormalised) extended target distributions for the Nested SMC sampler as,
\begin{align*}
  \extUTarget_k(x_{1:k}, u_{0:k}) \eqdef
  \frac{ \tau_k(u_{k-1}) \bar\psi_k^M(u_k \mid x_{1:k}) \bar\gamma_k^M(x_k \mid u_{k-1}) }{ q_k(x_k \mid x_{1:k-1}) }
  \frac{\pi_k(x_{1:k}) }{ \pi_{k-1}(x_{1:k-1})}\extUTarget_{k-1}(x_{1:k-1}, u_{0:k-1}),
\end{align*}
and $\extUTarget_0(u_0) = \bar\psi_0^M(u_0)$.
We write $\setXU_k = \setX_k \times \setU_k$ for the domain of $\extUTarget_k$.

\begin{lemma}
  \label{lem:app:proper-2}
  Assume that $\Class$ satisfies condition \asmpref{asmp:proper} in the main manuscript. Then,
  \begin{align*}
    \int \tau_k(u_{k-1}) \bar\gamma_k^M(x_k \mid u_{k-1}) \extUTarget_{k-1}(x_{1:k-1}, u_{0:k-1}) \myd u_{0:k-1}
    = \pi_{k-1}(x_{1:k-1}) q_k(x_k \mid x_{1:k-1}).
  \end{align*}
\end{lemma}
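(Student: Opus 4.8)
The plan is to prove the identity by induction on $k$, carried out jointly with the marginalisation property
\begin{equation*}
  \int \extUTarget_{k}(x_{1:k}, u_{0:k})\, \myd u_{0:k} = \pi_{k}(x_{1:k}),
\end{equation*}
since the two statements are mutually dependent: the displayed identity of the lemma at level $k$ will use the marginalisation property at level $k-1$, and, conversely, once the lemma is available at level $k$ the marginalisation property at level $k$ follows immediately from the recursive definition of $\extUTarget_k$. The base case is trivial, as $\extUTarget_0(u_0) = \bar\psi_0^M(u_0)$ integrates to $1 = \pi_0$ under the empty-sequence convention $x_{1:0} = \emptyset$.

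First I would unfold the recursive definition of $\extUTarget_{k-1}$ and observe that the variable $u_{k-1}$ enters only through the single factor $\bar\psi_{k-1}^M(u_{k-1} \mid x_{1:k-1})$; every other factor depends on $u_{0:k-2}$ alone. Thus I can write $\extUTarget_{k-1}(x_{1:k-1}, u_{0:k-1}) = \bar\psi_{k-1}^M(u_{k-1} \mid x_{1:k-1})\, h_{k-1}(x_{1:k-1}, u_{0:k-2})$, where $h_{k-1}$ collects the remaining factors. Substituting this into the left-hand side of the lemma and splitting the integration over $u_{k-1}$ from that over $u_{0:k-2}$ — permissible by Tonelli's theorem, all integrands being nonnegative and $h_{k-1}$ being free of $u_{k-1}$ — gives
\begin{equation*}
  \left[ \int \tau_k(u_{k-1}) \bar\gamma_k^M(x_k \mid u_{k-1}) \bar\psi_{k-1}^M(u_{k-1} \mid x_{1:k-1})\, \myd u_{k-1} \right] \int h_{k-1}(x_{1:k-1}, u_{0:k-2})\, \myd u_{0:k-2}.
\end{equation*}
The bracketed integral is exactly the left-hand side of Lemma~\ref{lem:app:proper-1}, so it equals $q_k(x_k \mid x_{1:k-1})$.

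For the remaining integral, since $\bar\psi_{k-1}^M(\cdot \mid x_{1:k-1})$ is a probability density in $u_{k-1}$, we have $\int h_{k-1}(x_{1:k-1}, u_{0:k-2})\, \myd u_{0:k-2} = \int \extUTarget_{k-1}(x_{1:k-1}, u_{0:k-1})\, \myd u_{0:k-1}$, which by the induction hypothesis equals $\pi_{k-1}(x_{1:k-1})$. Multiplying the two factors yields $\pi_{k-1}(x_{1:k-1})\, q_k(x_k \mid x_{1:k-1})$, as claimed. To close the induction I would then integrate the recursive definition of $\extUTarget_k$ over $u_k$ (which removes $\bar\psi_k^M$) and over $u_{0:k-1}$ using the identity just established; the factors $q_k(x_k \mid x_{1:k-1})$ and $\pi_{k-1}(x_{1:k-1})$ cancel against their reciprocals, leaving $\pi_k(x_{1:k})$ and hence the marginalisation property at level $k$. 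The only real subtlety — the ``main obstacle'' — is organising this interleaved induction correctly, since Lemma~\ref{lem:app:proper-2} at level $k$ and the marginalisation property at level $k-1$ must be threaded together; the algebra itself is routine once the factor $\bar\psi_{k-1}^M$ has been isolated and Lemma~\ref{lem:app:proper-1} invoked.
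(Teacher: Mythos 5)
Your proof is correct and follows essentially the same route as the paper's: an induction that uses Lemma~\ref{lem:app:proper-1} to integrate out one auxiliary variable per step. The only (immaterial) difference is the bookkeeping --- the paper takes the lemma itself as the induction hypothesis and applies Lemma~\ref{lem:app:proper-1} to the newest group $\tau_{k+1}(u_k)\bar\gamma_{k+1}^M(x_{k+1}\mid u_k)\bar\psi_k^M(u_k\mid x_{1:k})$, whereas you induct on the marginalisation identity $\int \extUTarget_{k-1}\,\myd u_{0:k-1}=\pi_{k-1}(x_{1:k-1})$ and apply Lemma~\ref{lem:app:proper-1} to $\tau_k(u_{k-1})\bar\gamma_k^M(x_k\mid u_{k-1})\bar\psi_{k-1}^M(u_{k-1}\mid x_{1:k-1})$ after correctly observing that $u_{k-1}$ enters $\extUTarget_{k-1}$ only through the factor $\bar\psi_{k-1}^M$, so the two arguments are logically interchangeable.
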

\begin{proof}
  The proof follows by induction. At $k=1$, we have
  \(
  \int \tau_1(u_{0}) \bar\gamma_1^M(x_1 \mid u_{0}) \bar\psi_{0}^M(u_{0}) \myd u_{0}
  = q_1(x_1)
  \)
  by Lemma~\ref{lem:app:proper-1}. Hence, assume that the hypothesis holds
  for $k\geq 1$ and consider
  \begin{align*}
    \int &\tau_{k+1}(u_{k}) \bar\gamma_{k+1}^M(x_{k+1} \mid u_{k}) \extUTarget_{k}(x_{1:k}, u_{0:k}) \myd u_{0:k} \\
    &= \frac{\pi_k(x_{1:k}) }{ \pi_{k-1}(x_{1:k-1}) q_k(x_k \mid x_{1:k-1}) }   \\
  &\cdot \int \tau_{k+1}(u_{k}) \bar\gamma_{k+1}^M(x_{k+1} \mid u_{k})
    \tau_k(u_{k-1}) \bar\psi_k^M(u_k \mid x_{1:k}) \bar\gamma_k^M(x_k \mid u_{k-1}) 
  \extUTarget_{k-1}(x_{1:k-1}, u_{0:k-1}) \myd u_{0:k} \\
  &= \frac{\pi_k(x_{1:k})  \left( \int \tau_{k+1}(u_{k}) \bar\gamma_{k+1}^M(x_{k+1} \mid u_{k}) \bar\psi_k^M(u_k \mid x_{1:k}) \myd u_k \right)}{ \pi_{k-1}(x_{1:k-1}) q_k(x_k \mid x_{1:k-1})} \\
  & \cdot \int \tau_k(u_{k-1})  \bar\gamma_k^M(x_k \mid u_{k-1}) \extUTarget_{k-1}(x_{1:k-1}, u_{0:k-1}) \myd u_{0:k-1}  \\
  &= \frac{\pi_k(x_{1:k})   q_{k+1}(x_{k+1} \mid x_{1:k}) \pi_{k-1}(x_{1:k-1}) q_k(x_k \mid x_{1:k-1}) }{ \pi_{k-1}(x_{1:k-1}) q_k(x_k \mid x_{1:k-1})} = \pi_k(x_{1:k})   q_{k+1}(x_{k+1} \mid x_{1:k}),
  \end{align*}
  where the penultimate equality follows by applying Lemma~\ref{lem:app:proper-1} and
  the induction hypothesis to the two integrals, respectively.
\end{proof}

As a corollary to Lemma~\ref{lem:app:proper-2}, it follows that
\begin{align}
  \label{eq:app:correct-marginal}
  \int \extUTarget_k(x_{1:k}, u_{0:k}) \myd u_{0:k} = \pi_k(x_{1:k}).
\end{align}
Consequently, $\extUTarget_k$ is normalised by the same constant $Z_{\pi_k}$ as $\pi_k$, and by defining
$\extTarget_k(x_{1:k},u_{0:k}) \eqdef Z_{\pi_k}^{-1}\extUTarget_k(x_{1:k},u_{0:k})$ we obtain a probability distribution
which admits $\bar\pi_k$ as a marginal (note that $\extTarget_0 = \extUTarget_0$, which is normalised by construction).
This implies that we can use $\extTarget_k$ as a proxy for $\bar\pi_k$ in a Monte Carlo algorithm,
\ie, samples drawn from $\extTarget_k$ can be used to compute expectations \wrt $\bar\pi_k$.
This is precisely what Algorithm~\ref{alg:app:nested-smc} does; it is a standard auxiliary
\smc sampler for the (unnormalised) target sequence $\extUTarget_k$, $k=\range{0}{n}$, with adjustment
multiplier weights $\nu_{k-1}(x_{1:k-1}, \tau_k(u_{k-1}))$
and proposal distribution $\bar\gamma_k^M(x_k \mid u_{k-1}) \bar\psi_k^M(u_k \mid x_{1:k})$.
The (standard) weight function for this sampler is thus given by
\begin{align}
  \label{eq:app:apf-wght}
  W_k(x_{1:k}, u_{0:k}) \propto
  \frac{ \pi_k(x_{1:k}) }{  \pi_{k-1}(x_{1:k-1}) }
  \frac{\tau_k(u_{k-1}) }{\nu_{k-1}(x_{1:k-1}, \tau_k(u_{k-1})) q_k(x_k \mid x_{1:k-1}) },
\end{align}
which is the same as the expression on line~\ref{step:app:nsmcs:Wght} of Algorithm~\ref{alg:app:nested-smc}.

\subsubsection{Central Limit Theorem -- Proof of Theorem~\ref{thm:clt} in the Main Manuscript}\label{sec:supp:clt}
Now that we have established that Nested \smc is in fact a standard auxiliary \smc sampler, albeit on an extended
state space, we can reuse existing convergence results from the \smc literature; see \eg,
\citet{JohansenD:2008,DoucM:2008,DoucMO:2009,Chopin:2004} or the extensive textbook by \citet{DelMoral:2004}.

Here, in order to prove Theorem~\ref{thm:clt} of the main manuscript, we make use of the result for the auxiliary
\smc sampler by \citet{JohansenD:2008}, which
in turn is based on the central limit theorem by \citet{Chopin:2004}.
The technique used by \citet{JohansenD:2008} is to reinterpret (as detailed below)
the auxiliary \smc sampler as a \emph{sequential importance sampling and resampling} (SISR) particle filter, by introducing
the modified (unnormalised) target distribution
\begin{align}
  \label{eq:proof1:mod-target}
  \extUTarget'_k(x_{1:k}, u_{0:k}) \eqdef \nu_{k}(x_{1:k}, \tau_{k+1}(u_{k}))  \extUTarget_k(x_{1:k}, u_{0:k}).
\end{align}
The auxiliary \smc sampler described in the previous section can then be viewed as a SISR
algorithm for \eqref{eq:proof1:mod-target}. Indeed,
if we write
 \(
  \extQarg{k-1}{k} \eqdef \bar\psi_k^M(u_k \mid x_{1:k}) \bar\gamma_k^M(x_k \mid u_{k-1})
\)
for the joint proposal distribution of $(x_k, u_k)$, then the weight function for this SISR sampler is given by
\begin{align}
  \label{eq:proof2:mod-weight}
  \WghtSISR_k(x_{1:k}, u_{0:k}) & \eqdef \frac{ \extTarget'_k(x_{1:k}, u_{0:k}) }{  \extQarg{k-1}{k} \extTarget'_{k-1}(x_{1:k-1}, u_{0:k-1})} \nonumber\\
  &\propto \nu_k(x_{1:k}, \tau_{k+1}(u_k)) W_k(x_{1:k}, u_{0:k}),
\end{align}
where $W_k$ is defined in \eqref{eq:app:apf-wght}.
This weight expression thus accounts for both the importance weights and the adjustment multipliers of the auxiliary \smc sampler formulation.

Since this SISR algorithm does not target $\extTarget_k$
(and thus not $\target_k$) directly, we use an additional \is step to compute estimators
of expectations \wrt to $\target$. The proposal distribution for this \is procedure is
given by
\begin{align}
  \label{eq:proof1:is-proposal}
  \bar\Gamma_k(x_{1:k}, u_{0:k}) \eqdef
  \extQarg{k-1}{k}  \extTarget'_{k-1}(x_{1:k-1}, u_{0:k-1}).
\end{align}
Note that we obtain an approximation of \eqref{eq:proof1:is-proposal} after the propagation Step~\ref{step:app:nsmcs:Prop} of
Algorithm~\ref{alg:app:nested-smc}, but before the weighting step.
The resulting \is weights, for target distribution $\extTarget_k(x_{1:k}, u_{0:k})$
and with proposal distribution \eqref{eq:proof1:is-proposal}, are given by
\begin{align*}
  \frac{ \extTarget_k(x_{1:k}, u_{0:k}) }{    \bar\Gamma_k(x_{1:k}, u_{0:k})  }
  \defeq \WghtIS_k(x_{1:k}, u_{0:k}) \propto W_k(x_{1:k}, u_{0:k}).
\end{align*}
Hence, with $f: \setX_k \mapsto \reals^d$ being a
test function of interest we can estimate $\E_{\target_k}[f] = \E_{\extTarget_k}[f]$ (with obvious abuse of notation) by the
estimator
\begin{align}
  \label{eq:proof1:is-estimator}
  \sum_{i=1}^N \frac{W_k^i f(X_{1:k}^i)}{ \sum_{\ell=1}^N W_k^\ell},
\end{align}
which, again, is in agreement with Algorithm~\ref{alg:app:nested-smc}.

We have now reinterpreted the \nsmc algorithm; first as a standard \emph{auxiliary} \smc sampler,
and then further as a standard SISR method. Consequently, we are now in the position of directly
applying, \eg, the central limit theorem by \citet[Theorem~1]{Chopin:2004}. The conditions
and the statement of the theorem are reproduced here for clarity.

For any measurable function $f: \setXU_0 \mapsto \reals^d$,
let $\widetilde V_0^M(f) = \var_{\bar\psi_0^M}(f)$ and define, for any measurable function $f: \setXU_k \mapsto \reals^d$,
\begin{align*}
  \widetilde V_k^M(f) &= \widehat V_{k-1}^M (\E_{\extQ_k^M}[f]) + \E_{\extTarget'_{k-1}}[\var_{\extQ_k^M}(f)], & k&>0, \\
  V_k^M(f) &= \widetilde V_k^M( \WghtSISR_k (f-\E_{\extTarget'_k}[f] )),  & k&\geq 0, \\
  \widehat V_k^M(f) &= V_k^M(f) + \var_{\extTarget'_k}(f),   & k&\geq 0.
\end{align*}
Define recursively $\Phi_k$ to be the set of measurable functions $f: \setXU_k \mapsto \reals^d$ such that
there exists a $\delta > 0$ with ${\E_{\bar\Gamma_k}[ \| \WghtSISR_k f \|^{2+\delta} ] < \infty }$
and such that the function $(x_{1:k-1}, u_{0:k-1}) \mapsto \E_{\extQ_k^M}[\WghtSISR_k f]$ is in $\Phi_{k-1}$.
Furthermore, assume that the identity function $f \equiv 1$ belongs to $\Phi_k$ for each $k$. Then,
it follows by \citet[Theorem~1 and Lemma~A.1]{Chopin:2004} that
\begin{align}
  \label{eq:proof1:clt1}
  N^{1/2} \left( \sum_{i=1}^N \frac{1}{N} f(X_{1:k}^i, U_{0:k}^i) - \E_{\bar\Gamma_k}[f]) \right) \convD \N(0, \widetilde V_k^M(f)),
\end{align}
for any function $f$ such that the function
$(x_{1:k-1}, u_{0:k-1}) \mapsto \E_{\extQ_k^M}[f-\E_{\bar\Gamma_k}[f]]$ is in $\Phi_{k-1}$
and there exists a $\delta > 0$ such that ${\E_{\bar\Gamma_k}[ \|f \|^{2+\delta} ] < \infty }$.
The convergence in \eqref{eq:proof1:clt1} thus holds for the unweighted samples obtained
after the propagation Step~\ref{step:app:nsmcs:Prop} of
Algorithm~\ref{alg:app:nested-smc}, but before the weighting step.

To complete the proof, it remains to translate \eqref{eq:proof1:clt1} into a similar result for
the \is estimator \eqref{eq:proof1:is-estimator}. To this end we make use of \citet[Lemma~A.2]{Chopin:2004}
which is related to the \is correction step of the \smc algorithm.
Specifically, for a function $f : \setX_k \mapsto \reals^d$,
let $f^e : \setXU_k \mapsto \reals^d$ denote the extension of $f$ to $\setXU_k$,
defined by $f^e(x_{1:k}, u_{0:k}) = f(x_{1:k})$. Then, for any $f : \setX_k \mapsto \reals^d$ such that
the function $(x_{1:k-1}, u_{0:k-1}) \mapsto \E_{\extQ_k^M}[\WghtIS_k f^e]$ is in $\Phi_{k-1}$
and there exists a $\delta > 0$ such that ${\E_{\bar\Gamma_k}[ \| \WghtIS_k f^e \|^{2+\delta} ] < \infty }$,
we have
\begin{align*}
  N^{1/2}\left( \sum_{i=1}^N \frac{W_k^i f(X_{1:k}^i)}{ \sum_{\ell=1}^N W_k^\ell} - \target_k(f) \right)
  \convD \N(0, \Sigma_k^M(f)),
\end{align*}
where $\{(X_{1:k}^i, W_k^i) \}_{i=1}^M$ are generated by Algorithm~\ref{alg:app:nested-smc}
and
\(
  \Sigma_k^M(f) = \widetilde V_k^M( \WghtIS_k (f^e-\E_{\extTarget_k}[f^e] )).
\)


\subsubsection{Nested \smc Generates Properly Weighted Samples -- Proof of Theorem~\ref{thm:proper} in the Main Manuscript}\label{sec:supp:proper}
In the previous two sections we showed that the \nsmc procedure is a valid inference algorithm for
$\bar\pi_n$. Next, we turn our attention to the modularity of the method and the validity of using
the algorithm as a component in another \nsmc sampler. Let us start by stating a more general
version of the backward simulator in Algorithm~\ref{alg:app:bs}. Clearly, if the forward \nsmc
procedure is fully adapted $W_k^i \equiv 1$, Algorithm~\ref{alg:app:bs} reduces to the backward
simulator stated in the main manuscript.

\begin{algorithm}
  \caption{Backward simulator}
  \label{alg:app:bs}
  \begin{enumerate}
  \item Draw $B_n$  from a categorical distribution with probabilities
      ${\displaystyle \frac{ W_n^j }{ \sum_{\ell=1}^\Np  W_n^\ell }}$ for $j = \range{1}{\Np}$.
  \item Set $\widetilde X_n = X_n^{B_n}$.
  \item \textbf{for $k=n-1$ to $1$}
    \begin{enumerate}
    \item Compute ${\displaystyle \widetilde W_k^j = W_k^j \frac{ \pi_n( (X_{1:k}^j, \widetilde X_{k+1:n}) ) }{ \pi_k(X_{1:k}^j) }}$ for $j = \range{1}{\Np}$.
    \item Draw $B_k$ from a categorical distribution with probabilities
      ${\displaystyle \frac{ \widetilde W_k^j }{ \sum_{\ell=1}^\Np  \widetilde W_k^\ell }}$ for $j = \range{1}{\Np}$.
    \item Set $\widetilde X_{k:n} = (X_k^{B_k}, \widetilde X_{k+1:n})$.
    \end{enumerate}
  \item \textbf{return $\widetilde X_{1:n}$}
  \end{enumerate}
\end{algorithm}

We will now show that the pair $(\widehat Z_{\pi_n}, \widetilde X_{1:n})$
generated by Algorithms~\ref{alg:app:nested-smc}~and~\ref{alg:app:bs} is properly weighted for $\pi_n(x_{1:n})$,
and thereby prove Theorem~\ref{thm:proper} in the main manuscript.

The proof is based on the \emph{particle Markov chain Monte Carlo} (PMCMC) construction \citep{andrieuDH2010particle}.
The idea used by \citet{andrieuDH2010particle} was to construct an extended target distribution,
incorporating all the random variables generated by an \smc sampler as auxiliary variables.
This opened up for using \smc approximations within \mcmc in a provably correct way; these seemingly
approximate methods simply correspond to standard \mcmc samplers for the (nonstandard) extended target
distribution. Here we will use the same technique to prove the proper weighing property
of the \nsmc procedure.

We start by introducing some additional notation for the auxiliary variables of the extended
target construction. 
While Algorithm~\ref{alg:app:nested-smc} is expressed using multinomial random variables $m_k^{1:N}$ in
the resampling step, it is more convenient for the sake of the proof to explicitly
introduce the \emph{ancestor indices} $\{A_k^i\}_{i=1}^N$; see \eg, \citet{andrieuDH2010particle}.
That is, $A_k^i$ is a categorical random variable on $\crange{1}{N}$, such that
$X_{1:k-1}^{A_k^i}$ is ancestor particle at iteration $k-1$ of particle $X_k^i$.
The resampling Step~\ref{step:app:nsmc:resample} of Algorithm~\ref{alg:app:nested-smc}
can then equivalently be expressed as: simulate independently $\{A_k^i\}_{i=1}^\Np$
from the categorical distribution with probabilities
\[
\frac{\widehat\nu_{k-1}^j W_{k-1}^j}{ \sum_{\ell=1}^\Np \widehat\nu_{k-1}^\ell W_{k-1}^\ell}.
\]
Let $\XX_k = \crange{ X_k^1 }{ X_k^N}$, $\UU_k = \crange{ U_k^1 }{ U_k^N}$, and
$\AA_k = \crange{ A_k^1 }{ A_k^N}$, denote all the particles, internal states of the proposals, and
ancestor indices, respectively, generated at iteration $k$ of the \nsmc algorithm.  We can then
write down the joint distribution of all the random variables generated in executing
Algorithm~\ref{alg:app:nested-smc} (up to an irrelevant permutation of the particle indices) as,
\begin{align}
  \label{eq:proof2:psi}
  \dSMC(\xx_{1:n}, \uu_{0:n}, \aa_{1:n}) =
  \left\{ \prod_{i=1}^N \bar\psi_0^M(u_0^i) \right\}
  \prod_{k=1}^n \left\{ \prod_{i=1}^N  \frac{\widehat\nu_{k-1}^{a_k^i} W_{k-1}^{a_k^i}}{ \sum_{\ell=1}^\Np \widehat\nu_{k-1}^\ell W_{k-1}^\ell} \extQ_k^M(x_k^i, u_k^i \mid x_{1:k-1}^{a_{k}^i}, u_{k-1}^{a_k^i} )\right\},
\end{align}
where we interpret $\widehat\nu_{k}^i$ and $W_{k}^i$ as deterministic functions of $(x_{1:k}^i, u_{0:k}^i)$.

Let $B_n$ denote a random variable defined on $\crange{1}{\Np}$. 
The extended target distribution for \pmcmc samplers corresponding to \eqref{eq:proof2:psi}
is then given by
\begin{align}
  \label{eq:proof2:phi}
  \dPMCMC(\xx_{1:n}, \uu_{0:n}, \aa_{1:n}, b_n) \eqdef
  \frac{\widehat Z_{\pi_n}}{Z_{\pi_n}} \frac{W_n^{b_n}}{\sum_{\ell=1}^N W_n^\ell} \dSMC(\xx_{1:n}, \uu_{0:n}, \aa_{1:n}),
\end{align}
where $\widehat Z_{\pi_n}$ is a deterministic function of $(\xx_{1:n}, \uu_{0:n}, \aa_{1:n})$.
We know from \citet{andrieuDH2010particle} that $\dPMCMC$ is a probability distribution
which admits $\extTarget_n$ as its marginal distribution for $(X_{1:n}^{b_n}, U_{0:n}^{b_n})$.
Consequently, by \eqref{eq:app:correct-marginal} it follows that the marginal distribution of $X_{1:n}^{b_n}$
is $\target_n$. For later reference we define recursively
$b_{k-1} \eqdef a_{k}^{b_{k}}$ for $k = \range{1}{n}$, the particle indices for the trajectory
obtained by tracing backward the genealogy of the $b_n$'th particle at iteration~$n$.

We now turn our attention to the backward simulator in Algorithm~\ref{alg:app:bs}.
Backward simulation has indeed been used in the context of \pmcmc, see \eg \citet{Whiteley:2010,LindstenS:2013,LindstenJS:2014}.
The strategy used for combining \pmcmc with backward simulation is to show that
each step of the backward sampler corresponds to a \emph{partially collapsed} Gibbs sampling step
for the extended target distribution $\dPMCMC$. This implies that the backward sampler leaves
$\dPMCMC$ invariant.

We use the same approach here, but we need to be careful in how we apply the existing results,
since the \pmcmc distribution $\dPMCMC$ is defined \wrt to $\extTarget_n$, whereas
the backward simulator of Algorithm~\ref{alg:app:bs} works with
the original target distribution $\target_n$. Nevertheless, from the proof of Lemma~1 by \citet{LindstenJS:2014}
it follows that we can write the following collapsed conditional distribution of $\dPMCMC$ as:
\begin{align}
  \label{eq:proof2:bs-i1}
  \dPMCMC( b_k, u_{k:n}^{b_{k:n}} &\mid  \xx_{1:k}, \uu_{0:k-1}, \aa_{1:k}, x_{k+1:n}^{b_{k+1:n}}, b_{k+1:n})\nonumber \\
  &\propto W_k^{b_k} \frac{ \extUTarget_n( \{ x_{1:k}^{b_k}, x_{k+1:n}^{b_{k+1:n}} \},\{u_{0:k-1}^{a_{k}^{b_k}}, u_{k:n}^{b_{k:n}} \} )  }{ \extUTarget_k( x_{1:k}^{b_k}, \{u_{0:k-1}^{a_{k}^{b_k}}, u_{k}^{b_{k}} \} ) }
  \bar\psi_k^M(u_k^{b_k} \mid x_{1:k}^{b_k}).
\end{align}
To simplify this expression, consider,
\begin{align}
  \notag
  \frac{ \extUTarget_n(x_{1:n}, u_{0:n}) }{  \extUTarget_k(x_{1:k}, u_{0:k}) }
  &= \prod_{s=k+1}^n \left\{ \frac{ \tau_s(u_{s-1}) \bar\psi^M_{s}(u_{s} \mid x_{1:s}) \bar\gamma^M_{s}(x_s \mid u_{s-1}) }{ q_s(x_s \mid x_{1:s-1}) }  \frac{ \utarget_s(x_{1:s}) }{ \utarget_{s-1}(x_{1:s-1}) }   \right\} \\
  \label{eq:proof2:bs-i2}
  &= \frac{\bar\psi_n^M(u_n\mid x_{1:n}) }{\bar\psi_k^M(u_k \mid x_{1:k}) }  \left\{\prod_{s=k+1}^n \frac{ \tau_s(u_{s-1}) \bar\psi^M_{s-1}(u_{s-1} \mid x_{1:s-1}) \bar\gamma^M_{s}(x_s \mid u_{s-1}) }{ q_s(x_s \mid x_{1:s-1}) }   \right\} \frac{ \utarget_n(x_{1:n}) }{ \utarget_k(x_{1:k}) }.
\end{align}
By Lemma~\ref{lem:app:proper-1} we know that each factor of the product (in brackets) on the second
line integrates to $1$ over $u_{s-1}$. Hence, plugging~\eqref{eq:proof2:bs-i2}
into~\eqref{eq:proof2:bs-i1} and integrating over $u_{k:n}^{b_{k:n}}$ yields
\begin{align*}
  \dPMCMC( b_k \mid  \xx_{1:k}, \uu_{0:k-1}, \aa_{1:k}, x_{k+1:n}^{b_{k+1:n}}, b_{k+1:n})
  \propto W_k^{b_k} \frac{ \utarget_n( \{ x_{1:k}^{b_k}, x_{k+1:n}^{b_{k+1:n}} \} )  }{ \utarget_k( x_{1:k}^{b_k} ) },
\end{align*}
which coincides with the expression used to simulate the index $B_k$ in Algorithm~\ref{alg:app:bs}.
Hence, simulation of $B_k$ indeed corresponds to a partially collapsed Gibbs sampling step for $\dPMCMC$
and it will thus leave $\dPMCMC$ invariant. (Note that, in comparison with the \pmcmc sampler derived by
\citet{LindstenJS:2014} we further marginalise over the variables $u_{k:n}^{b_{k:n}}$ which, however,
still results in a valid partially collapsed Gibbs step.)

We now have all the components needed to prove proper weighting of the combined \nsmc/backward
simulation procedure. For notational simplicity, we write
\begin{align*}
  \dBS{k}(b_k) = \dPMCMC( b_k \mid  \xx_{1:k}, \uu_{0:k-1}, \aa_{1:k}, x_{k+1:n}^{b_{k+1:n}}, b_{k+1:n}),
\end{align*}
for the distribution of $B_k$ in Algorithm~\ref{alg:app:bs}. 
Let $(\widehat Z_{\pi_n}, \widetilde X_{1:n})$ be
generated by Algorithms~\ref{alg:app:nested-smc}~and~\ref{alg:app:bs}.
Let $f$ be a measurable function and consider
\begin{align*}
  \E[ \widehat Z_{\pi_n} f( \widetilde X_{1:n}) ]
   &= \int \widehat Z_{\pi_n} f( X_{1:n}^{b'_{1:n}} ) \left \{ \prod_{k=1}^n   \dBS{k}(\myd b'_k) \right\}
   \dSMC(\myd ( \xx_{1:n}, \uu_{0:n}, \aa_{1:n})) \\
   &= Z_{\pi_n} \int f( X_{1:n}^{b'_{1:n}} ) \left \{ \prod_{k=1}^{n-1}   \dBS{k}(\myd b'_k) \right\}
   \dPMCMC(\myd ( \xx_{1:n}, \uu_{0:n}, \aa_{1:n}, b_n')),
\end{align*}
where, for the second equality, we have used the definition \eqref{eq:proof2:phi} and noted that
$\dBS{n}(b_n) = \frac{W_n^{b_n}}{\sum_{\ell=1}^N W_n^\ell} $.
However, by the invariance of $\dBS{k}$ \wrt $\dPMCMC$, it follows that
\begin{align*}
  \E[ \widehat Z_{\pi_n} f( \widetilde X_{1:n}) ]
   = Z_{\pi_n} \int f( X_{1:n}^{b_{1:n}} ) \dPMCMC(\myd ( \xx_{1:n}, \uu_{0:n}, \aa_{1:n}, b_n))
   = Z_{\pi_n} \target_n(f),
\end{align*}
which completes the proof.


\subsection{Nested Sequential Importance Sampling}\label{sec:supp:nis}
Here we give the definition of the nested sequential importance sampler and we show that a special case of this is the importance sampling squared (IS$^2$) method by \citet{tranSPK2013importance}.

\subsubsection{Nested Sequential Importance Sampling}
We present a straightforward extension of the Nested \is class to a sequential \is version. Consider the following definition of the Nested SIS $\Class$:

\begin{enumerate}
  \setlength{\itemsep}{-1mm}
\item Algorithm~\ref{alg:nsmc:nested-sis} is executed at the construction of the object $\Obj[p] = \Class(\utarget_n, \Np)$,
  and $\Obj[p].\GetZ()$ returns the normalising constant estimate $\widehat Z_{\pi_n}$.
\item $\Obj[p].\Simulate()$ simulates a categorical random variable $B$ with $\Prb(B = i) = W_n^i / \sum_{\ell=1}^\Np W_n^\ell$
  and returns $X_{1:n}^B$.
\end{enumerate}

\begin{algorithm}[h]
  \caption{Nested SIS (\emph{all} $i$ \emph{for} $\range{1}{\Np}$)}
  \label{alg:nsmc:nested-sis}
    \begin{enumerate}
	  \item Initialise $\Obj^i = \Class(q_1(\cdot), M)$.
	  \item Set $\widehat Z_{q_1}^i = \Obj^i.\GetZ(), ~X_1^i = \Obj^i.\Simulate()$.
	  \item Set $W_1^i = {\displaystyle \frac{ \widehat Z_{q_1}^i \pi_1(X_1^i)}{ q_1( X_1^i) }}$.
      \item \textbf{delete} $\Obj^i$.
	  \item \emph{for} $k = 2$ \emph{to} $n$:
	  \begin{enumerate}
		\item Initialise $\Obj^i = \Class(q_k(\cdot \mid X_{1:k-1}^i), M)$.
		\item Set $\widehat Z_{q_k}^i = \Obj^i.\GetZ(), ~X_k^i = \Obj^i.\Simulate()$.
		\item Set $W_k^i = W_{k-1}^i{\displaystyle \frac{ \widehat Z_{q_k}^i \pi_k(X_k^i \mid X_{1:k-1}^i)}{ q_k( X_k^i \mid X_{1:k-1}^i) }}$.
        \item \textbf{delete} $\Obj^i$.
		\item Set $X_{1:k}^i \leftarrow (X_{1:k-1}^i, X_k^i)$
	  \end{enumerate}
	  \item Compute $\widehat Z_{\pi_n} = \frac{1}{\Np} \sum_{i=1}^\Np W_n^i.$
	  \end{enumerate}
\end{algorithm}

Note that we do not require that the procedure $\Class$ is identical for each individual proposal $q_k$, thus we have a flexibility in designing our algorithm as can be seen in the example in Section~\ref{sec:issquared}. We can motivate the algorithm in the same way as for Nested \is and similar theoretical results hold, \ie Nested \sis is properly weighted for $\pi_n$ and it admits $\bar \pi_n$ as a marginal.

\subsubsection{Relation to IS$^2$}\label{sec:issquared}
Here we will show how IS$^2$, proposed by \citet{tranSPK2013importance}, can be viewed as a special case of Nested \sis. We are interested in approximating the posterior distribution of parameters $\theta$ given some observed values $y$
\begin{align*}
\bar\pi (\theta \mid y) \propto p(y \mid \theta) p(\theta).
\end{align*}
We assume that the data likelihood $p(y \mid \theta)$ can, by introducing a latent variable $x$, be computed as an integral
\begin{align*}
p(y \mid \theta) = \int p(y \mid x, \theta) p(x \mid \theta)\ \myd x.
\end{align*}
Now, let our target distribution in Nested \sis be $\bar \pi_2 (\theta, x) = \bar \pi_2(x \mid \theta) \bar \pi_1 (\theta) = \frac{p(y \mid x, \theta)p(x \mid \theta)}{p(y \mid \theta)} p(\theta)$. We set our proposal distributions to be
\begin{align*}
\bar q_1(\theta) &= g_{\text{IS}}(\theta), \\
\bar q_2(x \mid \theta) &= \frac{p(y \mid x, \theta)p(x \mid \theta)}{p(y \mid \theta)}.
\end{align*}
First, $\Class(q_1(\cdot),1)$ runs an exact sampler from the proposal $g_{\text{IS}}$. Then at iteration $k=2$ we let the nested procedure $\Class(q_2(\cdot \mid \theta^i),M)$ be a standard \is algorithm with proposal $h(x \mid y,\theta)$, giving us properly weighted samples for $q_2$. Putting all this together gives us samples $\theta^i$ distributed according to $g_{\text{IS}}(\theta)$ and weighted by
\begin{align}
W_2^i &\propto \frac{p(\theta^i)}{g_{\text{IS}}(\theta^i)} \cdot \frac{p(y \mid x^i, \theta^i) p(x^i \mid \theta^i) \frac{1}{M}\sum_{\ell=1}^M \frac{p(y \mid x^\ell, \theta^i) p(x^\ell \mid \theta^i)}{h(x^\ell \mid y, \theta^i)}}{p(y \mid x^i, \theta^i) p(x^i \mid \theta^i) } = \frac{\widehat p_M(y \mid \theta^i) p(\theta^i)}{g_{\text{IS}}(\theta^i)},\label{eq:issquared:w}
\end{align}
where $\widehat p_M(y \mid \theta^i) = M^{-1} \sum_{\ell=1}^M \frac{p(y \mid x^\ell, \theta^i) p(x^\ell \mid \theta^i)}{h(x^\ell \mid y, \theta^i)}$. Thus we obtain a Nested \sis method that is identical to the IS$^2$ algorithm proposed by \citet{tranSPK2013importance}.

\subsection{Further Details on the Experiments}
\label{sec:supp:expts}
We provide some further details and results for the experiments presented in the main manuscript.
\subsubsection{Gaussian State Space Model}\label{sec:supp:expts:lgss}
We generate data from a synthetic $d$-dimensional (dim$(x_k) = d$) dynamical/spatio-temporal\footnote{Note that in a previous version this was erraneously stated as equivivalent to the Gaussian MRF we use for sequential inference. Thus this example actually illustrates a problem where we have a misspecified model. However, this misspecification does not lead to any discernible difference in the MSE results. This because the exact filtering marginals for the two different models (LGSS, GMRF) with the parameters chosen differs with orders of magnitudes much lower than the Monte Carlo errors.} model defined by
\begin{align*}
x_k \mid x_{k-1} &\sim \mathcal{N}(x_k; \mu_k(x_{k-1}), \Sigma), \\
y_k \mid x_k &\sim \mathcal{N}(y_k; x_k, \tau_\phi^{-1} I),
\end{align*}
where $\Sigma$ and $\mu_k$ are given as follows
\begin{align*}
\Sigma &= 
\begin{pmatrix}
\tau_\rho+\tau_\psi 	& -\tau_\psi			& 0 					& \cdots 				& \cdots 					& 0							& 0			\\
-\tau_\psi				& \tau_\rho+2\tau_\psi	& -\tau_\psi 			& 0 					& \cdots 					& 0 						& 0			 \\
0						& \ddots 				& \ddots 				& \ddots 				& \ddots 					& 0 						& 0			\\
\vdots					& \ddots 				& \ddots 				& \ddots 				& \ddots 					& 0							& 0		\\
\vdots					& \vdots				& \ddots 				& \ddots 				& \ddots 					& -\tau_\psi				& 0\\
0						& 0						& 0		 				& 0						& -\tau_\psi		 		& \tau_\rho+2\tau_\psi		& -\tau_\psi\\
0						& 0						& 0		 				& 0			 			& 0		 					& -\tau_\psi 				& \tau_\rho+\tau_\psi\\
\end{pmatrix}^{-1}, \\
\mu_k(x_{k-1}) &= a\tau_\rho \Sigma x_{k-1}.
\end{align*}
Alternatively, in a more standard state space model notation, we have
\begin{align*}
x_k &= A x_{k-1} + v_k, ~v_k \sim \mathcal{N}(0, Q), \\
y_k &= x_k + e_k, ~e_k \sim \mathcal{N}(0,R),
\end{align*}
where $A = a \tau_\rho \Sigma$, $Q = \Sigma$ and $R = \tau_\phi^{-1} I$. We assume that the parameters $\theta = (\tau_\psi, a, \tau_\rho, \tau_\phi) = (1, 0.5, 1, 10)$ are known.

To do inference with this generated data-set $\{y_k\}$ we propose to target the following slightly different model
\begin{align*}
p(x_{1:k},y_{1:k}) \propto \prod_{j=1}^k \boldsymbol\phi(x_j,y_j)\boldsymbol\rho(x_j) \boldsymbol\psi (x_j, x_{j-1}),
\end{align*}
where the observation potential $\boldsymbol\phi$ and interaction potentials $\boldsymbol\rho$ and $\boldsymbol\psi$ are given by
\begin{align*}
\boldsymbol\phi(x_k,y_k) &= \prod_{l=1}^d \phi_l (x_{k,l},y_{k,l}) = \prod_{l=1}^d e^{-\frac{\tau_\phi}{2} (x_{k,l} - y_{k,l})^2}, \\
\boldsymbol\psi(x_k) &= \prod_{l=2}^d \psi_l (x_{k,l},x_{k,l-1}) =  \prod_{l=2}^d e^{-\frac{\tau_\psi}{2} (x_{k,l} - x_{k,l-1})^2}, \\
\boldsymbol\rho (x_k, x_{k-1}) &= \prod_{l=1}^d  \rho_l (x_{k,l}, x_{k-1,l}) = \prod_{l=1}^d e^{-\frac{\tau_\rho}{2} (x_{k,l} - a x_{k-1,l})^2}.
\end{align*}
This can be visualised as a Gaussian rectangular ($d \times k$) lattice MRF, \ie it grows with ``time'' $k$. The goal is
to estimate the filtering distribution $p(x_k \mid y_{1:k})$. Note that this model has almost identical filtering marginals as the data generating distribution and leads to a simpler implementation of \nsmc and \stpf. 

Results (mean-squared-error, MSE) comparing \nsmc and \stpf for different settings of $N$ and $M$ can be found in the first row of Figure~\ref{fig:lgss:supp} and the second row displays the results when comparing \stpf to the \smc method by \citet{naessethls2014sequential} for equal computational budgets. We show median (over dimensions $d$) MSE for posterior marginal mean and variance estimates of the respective algorithms. True values are obtained using belief propagation. Note that setting $N=1$ in \stpf can be viewed as a special case of the \smc method by \citet{naessethls2014sequential}.

\begin{figure}[tb]
\centering
	\begin{tabular}{m{.5\textwidth} m{.5\textwidth}}
     \begin{center}{\scriptsize Median MSE for $\E[x_{k,\ell}]$}\end{center} & \begin{center}{\scriptsize Median MSE for $\var (x_{k,\ell})$}\end{center} \\
	 \includegraphics[width=.5\textwidth]{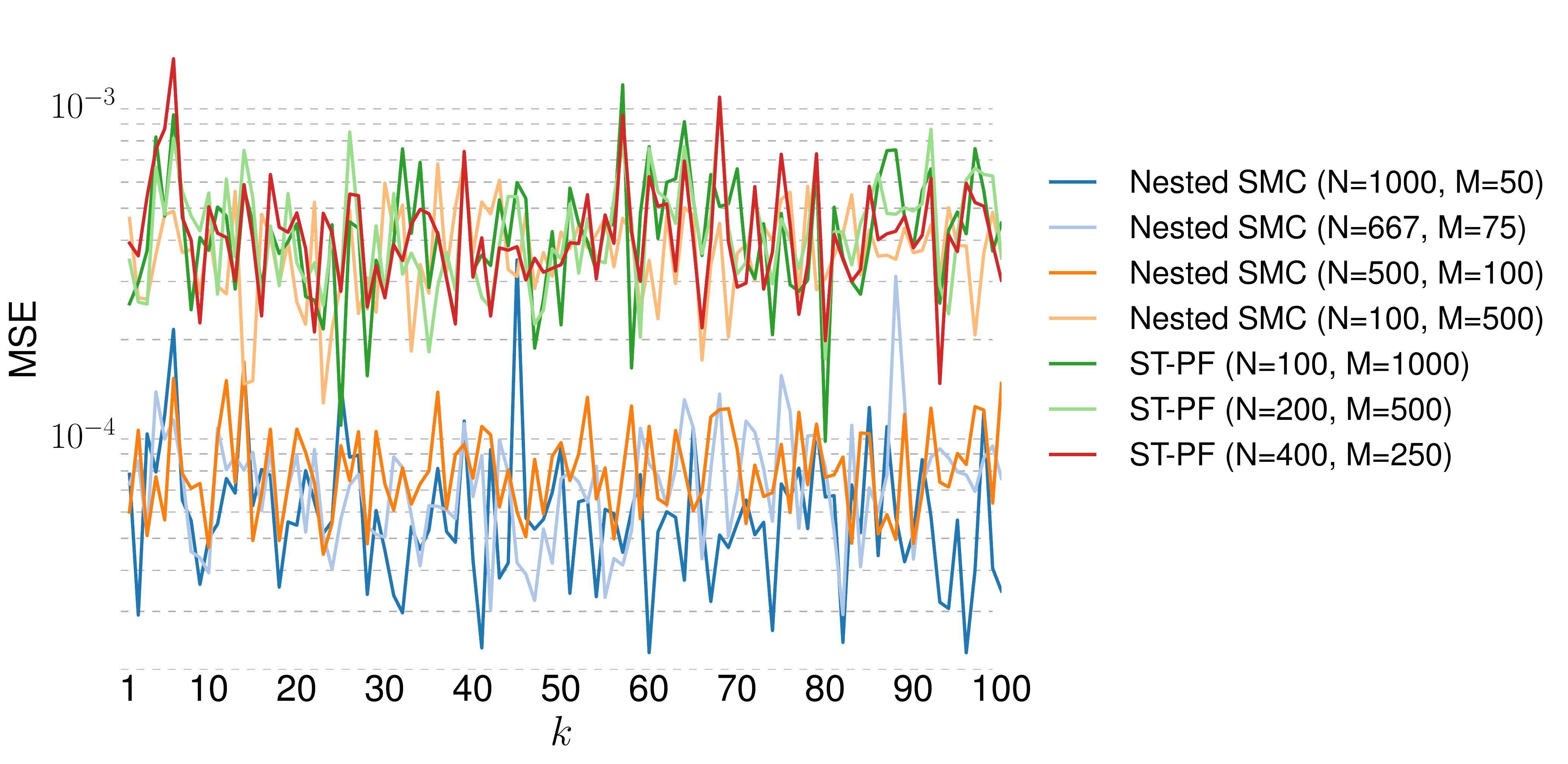} & \includegraphics[width=.5\textwidth]{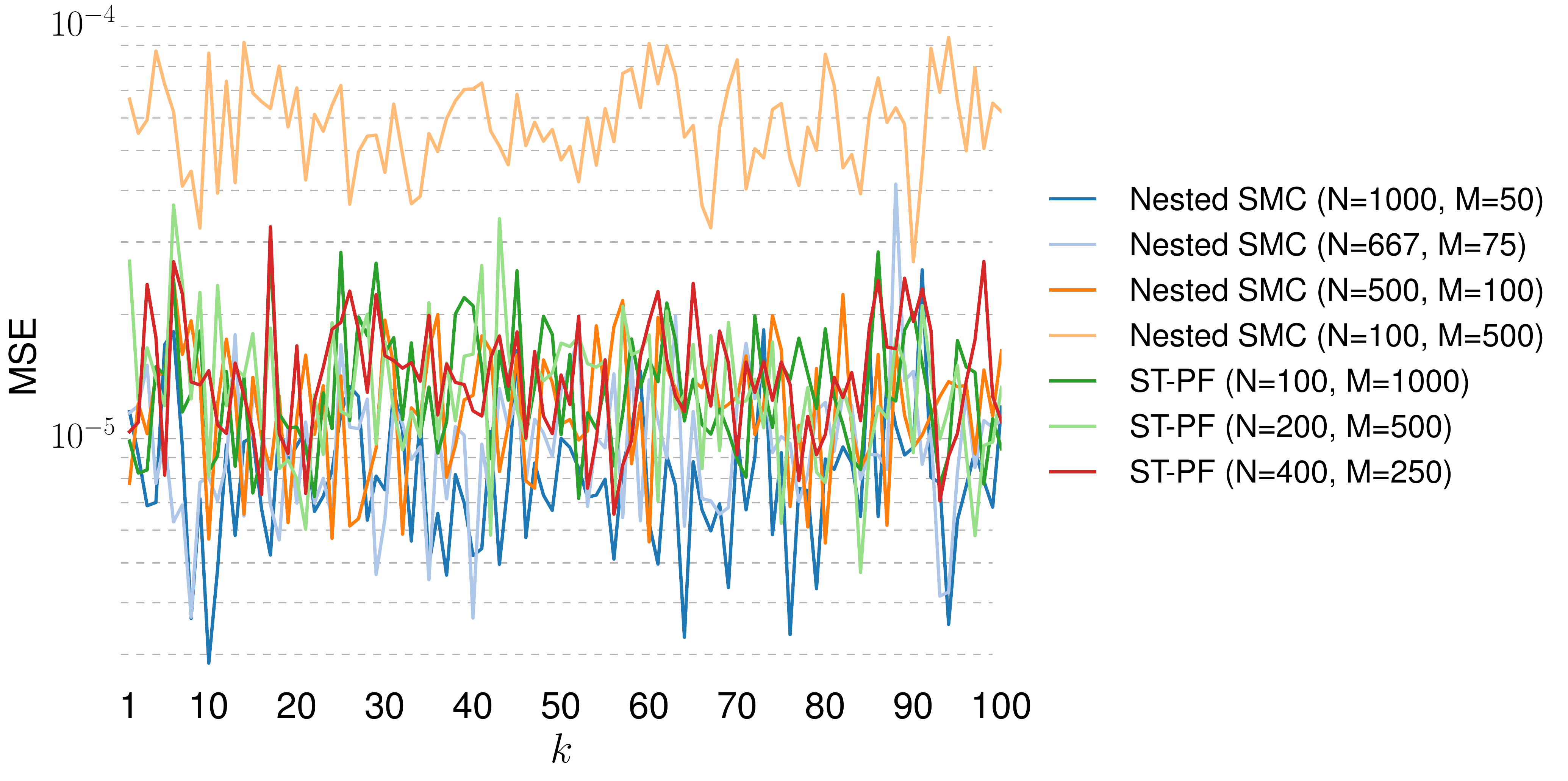} \\
     \begin{center}{\scriptsize Median MSE for $\E[x_{k,\ell}]$}\end{center} & \begin{center}{\scriptsize Median MSE for $\var (x_{k,\ell})$}\end{center} \\
	 \includegraphics[width=.5\textwidth]{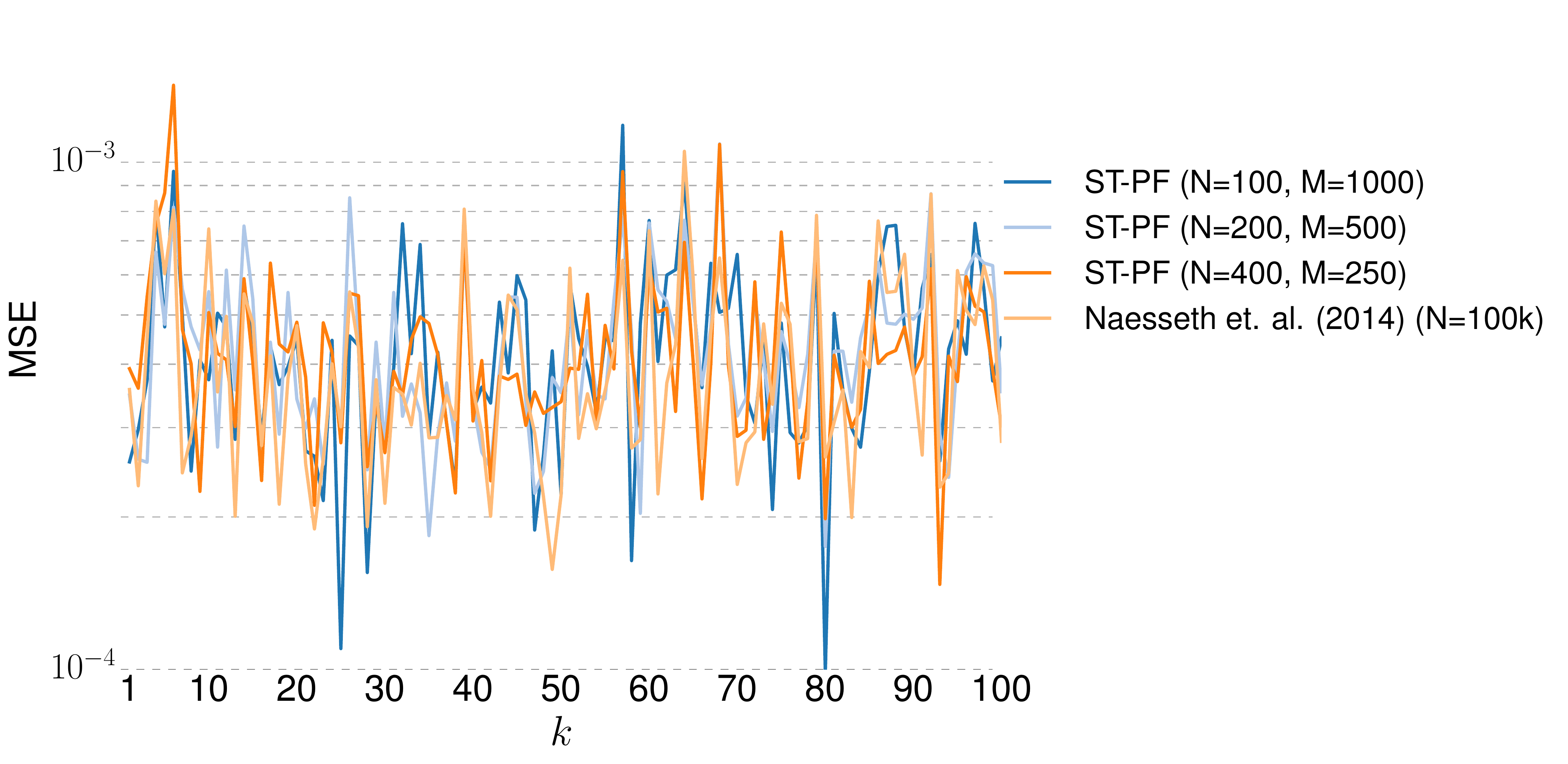} & \includegraphics[width=.5\textwidth]{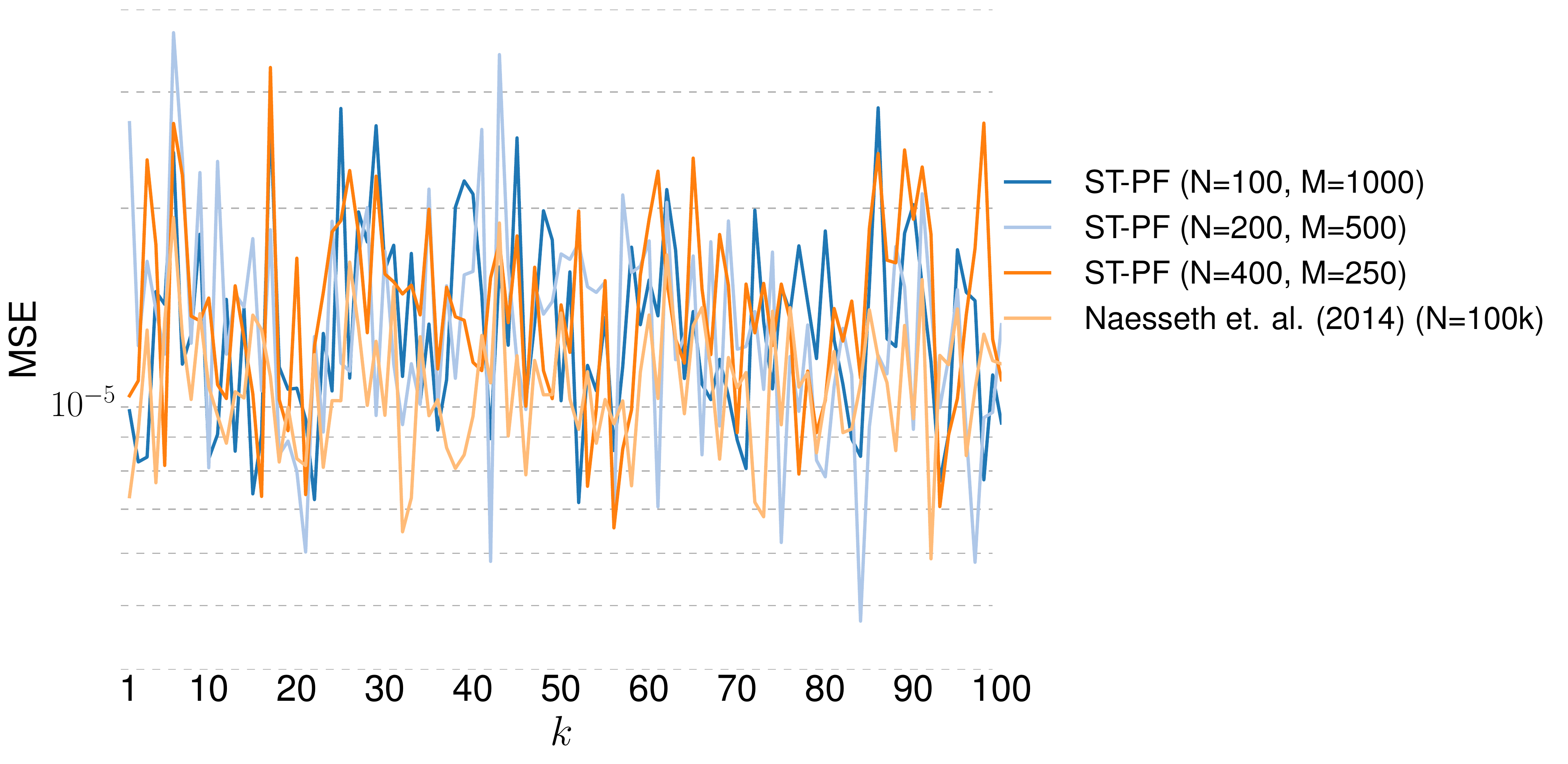}
	 \end{tabular}
	\caption{\emph{Top:} Comparisons for different settings of $N$ and $M$ on the $50$-dimensional \ssm. \emph{Bottom:} Illustrating the connection between \stpf and the \smc method by \citet{naessethls2014sequential}.}\label{fig:lgss:supp}
\end{figure}

\subsubsection{Spatio-Temporal Model -- Drought Detection}
We present the full model for drought detection in our notation, this is essentially the model by \citet{fuBLS2012drought} adapted for estimating the filtering distribution. The latent variables for each location on a finite world grid, $x_{k,i,j}$, are binary, \ie $0$ being normal state and $1$ being the abnormal (drought) state. Measurements, $y_{k,i,j}$, are available as real valued precipitation values in millimeters. The probabilistic model for filtering is given as,
\begin{subequations}
\begin{align}
p(x_{1:k},y_{1:k}) \propto \prod_{n=1}^k \phi(x_n,y_n) \mathbb\rho(x_n) \mathbb\psi (x_n, x_{n-1}) ,
\end{align}
where
\begin{align}
\phi(x_k,y_k) &= \prod_{i=1}^{I} \prod_{j=1}^{J} \exp\left\{-\frac{1}{2 \sigma_{i,j}^2} \left( y_{k,i,j} - \mu_{\text{ab},i,j} x_{k,i,j} - \mu_{\text{norm},i,j} (1-x_{k,i,j}) \right)^2\right\}, \\
\rho (x_k) &= \prod_{i=1}^{I} \prod_{j=1}^{J} \exp\left\{ C_1 \left(\mathbbm{1}_{x_{k,i,j} = x_{k,i,j-1}} + \mathbbm{1}_{x_{k,i,j} = x_{k,i-1,j}} \right) \right\} , \label{eq:ind0}\\
\psi (x_k, x_{k-1}) &= \prod_{i=1}^{I} \prod_{j=1}^{J} \exp\left\{ C_2 \mathbbm{1}_{x_{k,i,j} = x_{k-1,i,j}} \right\}.
\end{align}
\end{subequations}
Here, $\mathbbm{1}$ is the indicator function, and with the convention that all expressions in \eqref{eq:ind0} that end up with index $0$ evalute to $0$. The parameters $C_1, C_2$ are set to $0.5, 3$ as in \citep{fuBLS2012drought}. Location based parameters $\sigma_{i,j}, \mu_{\text{ab},i,j}, \mu_{\text{norm},i,j}$ are estimated based on data from the CRU dataset with world precipitation data from years $1901-2012$. For the North America region we consider a $20 \times 30$ region with latitude $35-55^\circ N$ and longitude $90-120^\circ W$. For the Sahel region we consider a $24 \times 44$ region with latitude $6-30^\circ N$ and longitude $10^\circ W - 35^\circ E$. Note that for a few locations in Africa (Sahel region) the average yearly precipitation was constant. For these locations we simply set $\mu_{\text{norm},i,j}$ to be this value, $\mu_{\text{ab},i,j}=0$ and $\sigma_{i,j}^2$ to be the mean variance of all locations, thus this might have introduced some artifacts. Some representative results for the Sahel region are displayed in Figure~\ref{fig:sahel}.
\begin{figure}
\centering
	\begin{tabular}{m{.33\textwidth} m{.33\textwidth} m{.33\textwidth}}
	 \includegraphics[width=.33\textwidth]{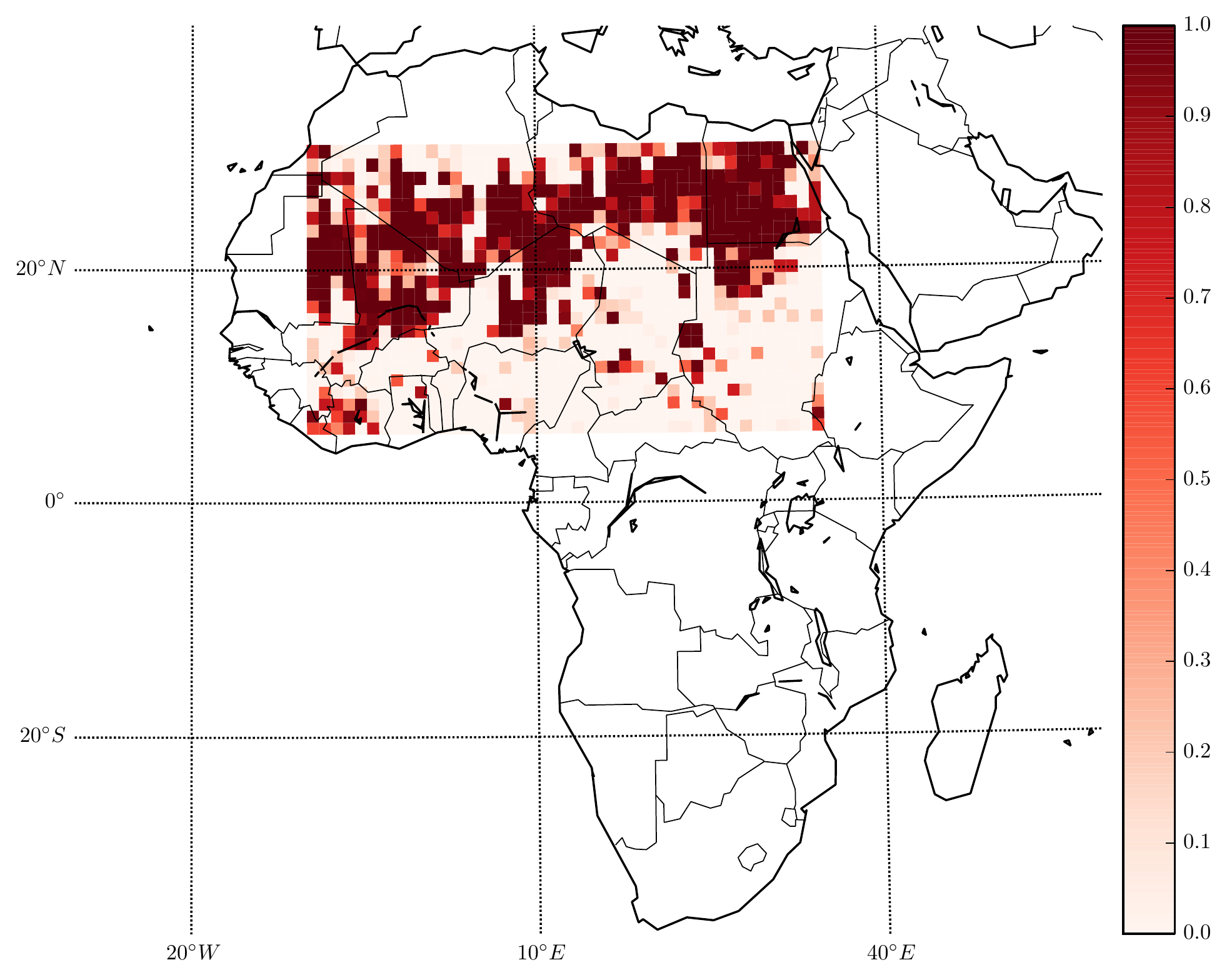} & \includegraphics[width=.33\textwidth]{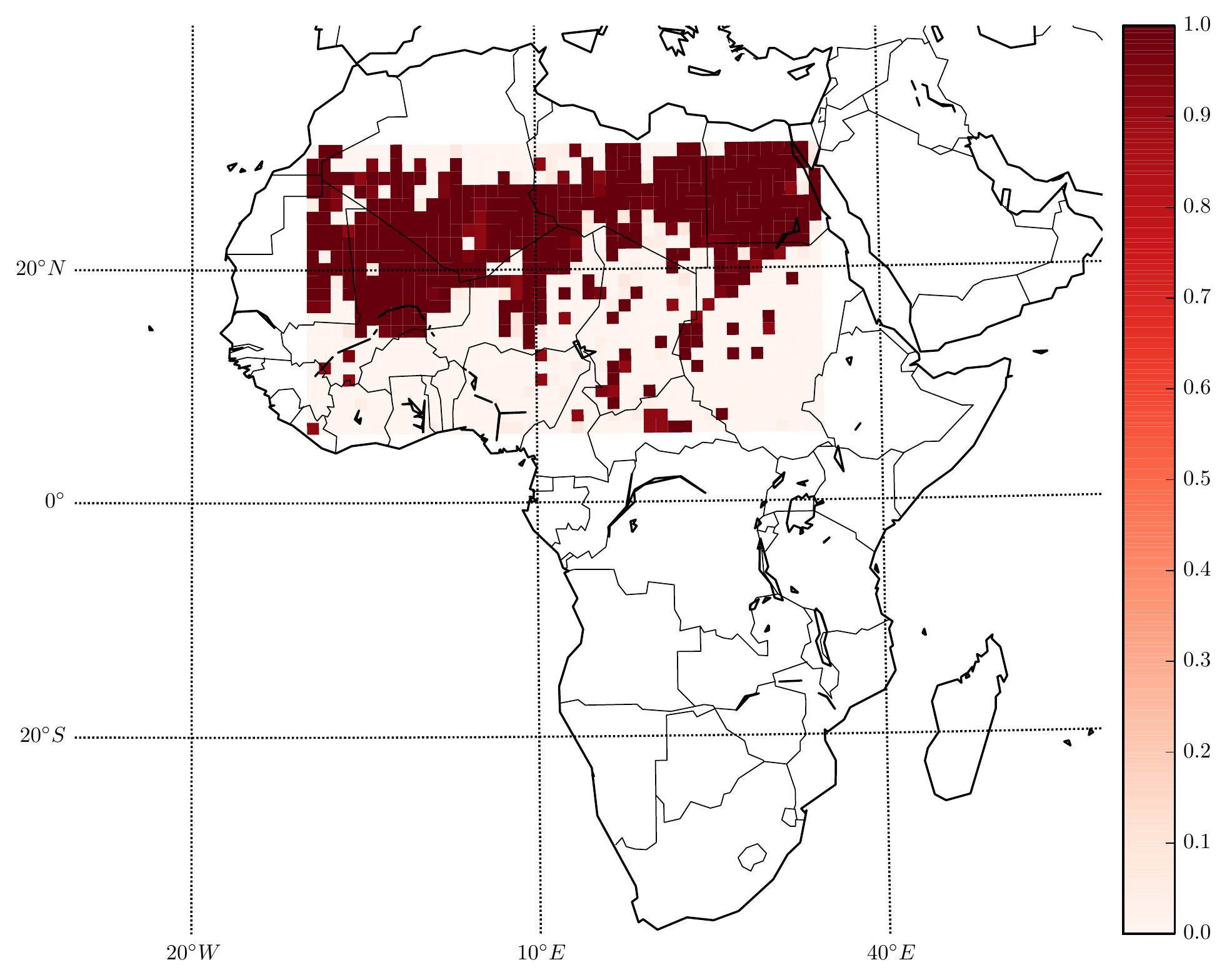} & \includegraphics[width=.33\textwidth]{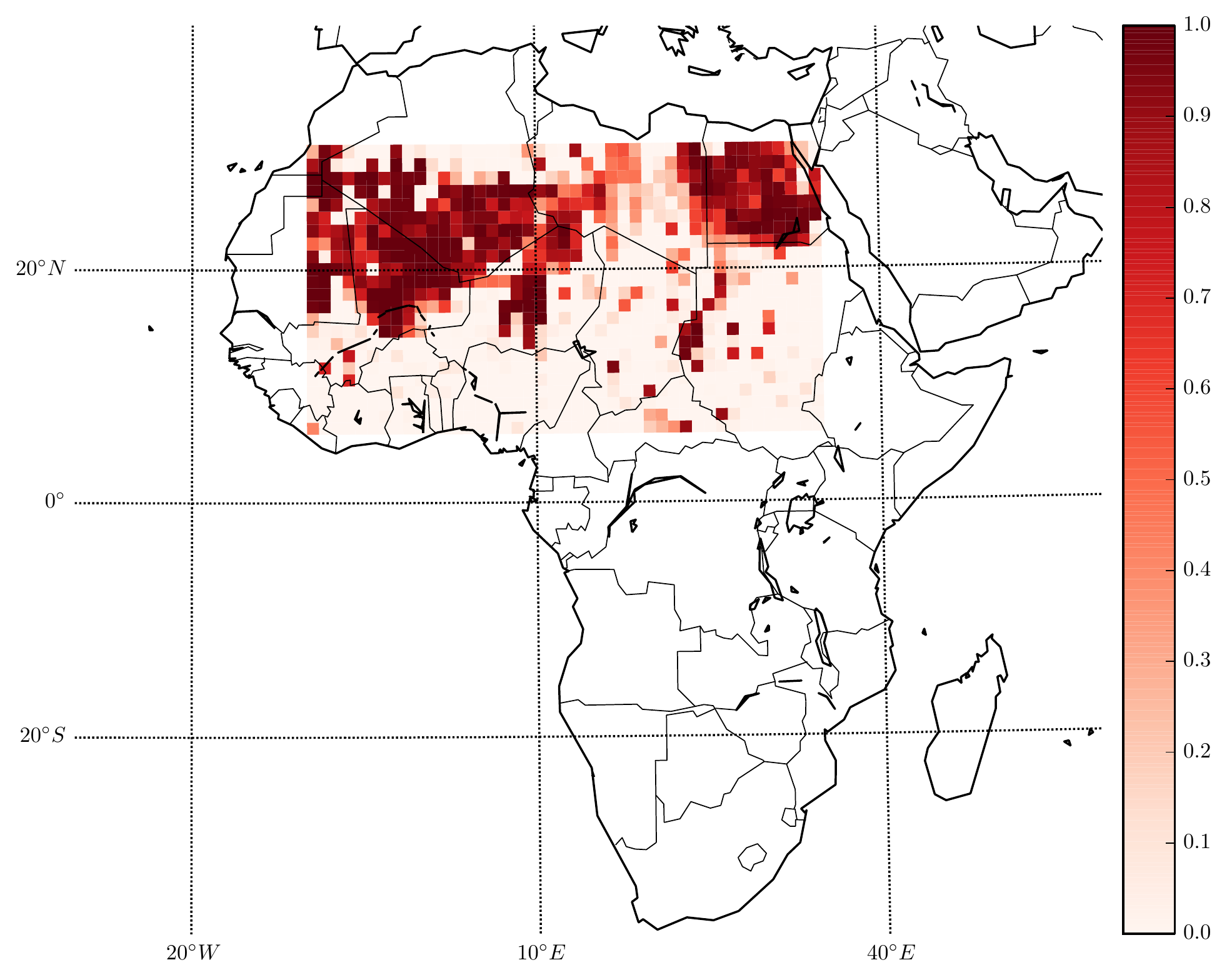} \\
	 \centering Sahel region $1986$ & \centering Sahel region $1987$ & \centering Sahel region $1988$
	 \end{tabular}
	\caption{Estimate of $\Prb(X_{k,i,j}=1 \mid y_{1:k})$ for all sites over a span of 3 years. All results for $\Np = 100, \Np_1 = \{30,40\}, \Np_2 = 20$.}\label{fig:sahel}
\end{figure}

\bibliography{refs}

\end{document}